\newcommand{\ket}[1]{|#1\rangle}
\newcommand{\bra}[1]{\langle #1|}
\newcommand{\Id}{{\mathbb I}}
\newcommand{\Tr}{{\textrm {Tr}}}
\newcommand{\T}{{\mathrm {t}}}
\newcommand{\e}{{\mathrm {e}}}
\newcommand{\diag}{{\mathrm {diag}}}
\newcommand{\etal}{\textit {et al. }}
\theoremstyle{definition}
\newtheorem{theorem}{Theorem}
\newtheorem{lemma}[theorem]{Lemma}
\newtheorem{proposition}[theorem]{Proposition}
\newtheorem{remark}[theorem]{Remark}
\def \l{\mathcal{L}}
\def \d{\mathcal{D}}
\begin{document}

\title{
Optimal quantum speed for mixed states
}
\

\author{Ashraf Naderzadeh-ostad}
\email{a\_naderzadeh@outlook.com}
\affiliation{Department of Physics, Ferdowsi University of Mashhad, Mashhad, Iran}
\author{Seyed Javad Akhtarshenas}
\email{akhtarshenas@um.ac.ir}
\thanks{Corresponding author}
\affiliation{Department of Physics, Ferdowsi University of Mashhad, Mashhad, Iran}

\begin{abstract}
The question of  how fast a quantum state can evolve is considered. Using the definition of squared speed based on the Euclidean distance given in [Phys. Rev. Research, {\bf 2}, 033127 (2019)], we present a systematic framework to obtain the optimal speed of a  $d$-dimensional system evolved unitarily under  a  time-independent Hamiltonian.
Among the set of mixed quantum states having the same purity, the optimal state is obtained in terms of its purity parameter.  We show that for  an arbitrary $d$, the optimal state is represented by a  $X$-state with an additional property of being  symmetric with respect to the secondary diagonal.
For  sufficiently low purities for which the purity  exceeds the purity of maximally mixed state $\Id/d$ by at most $2/d^2$,  the only nonzero off-diagonal entry of the optimal state is  $\varrho_{1d}$, corresponding to the transition amplitude between two energy eigenstates with minimum and maximum eigenvalues, respectively.  For larger  purities, however,  whether or not the other secondary diameter entries $\varrho_{i,d-i+1}$  take nonzero values depends on their  relative   energy gaps $|E_{d-i+1}-E_{i}|$.
The effects  of coherence and entanglement, with respect to the energy basis,  are  also examined and found that for optimal states both resources are monotonic functions of purity, so  they can cause speed up quantum evolution leading to a smaller quantum speed limit.
Our results show that although the coherence of the states is responsible for the speed of evolution,   only the coherence caused by some off-diagonal entries located on the secondary diagonal play a  role in the fastest states.
\end{abstract}
\keywords{quantum speed limit; optimal quantum speed; mixed states; Euclidean metric; Wigner-Yanase skew information}

\maketitle

\section{Introduction}\label{Sec-Introduction}
The maximum  speed of evolution of a quantum system  is a fundamental and important concept that has many applications in many areas of  physics such as quantum communication \cite{BekensteinPRL1981}, quantum computation \cite{LloydN2000}, quantum metrology \cite{GiovannettiNP2011}, and optimal control \cite{CanevaPRL2009}.  The quantum speed limit (QSL)  then provides a lower bound on the  time required   to evolve an initial state $\varrho_{0}$ to a target state $\varrho_{t}$ under the Hamiltonian $H$
\begin{equation}
\tau\ge \tau_{\textrm{QSL}}.
\end{equation}
For isolated systems, the QSL between two  orthogonal states is defined as the maximum of the Mandelstam-Tamm (MT) bound  \cite{MandelstamJP1945}  and the Margolus-Levitin (ML) bound \cite{MargolusPD1998}, i.e.,
\begin{equation}
\tau_{\textrm{QSL}}=\max\{\pi\hbar/(2\Delta E), \pi \hbar/(2\langle E \rangle) \}.
\end{equation}
Above, the MT bound depends on the variance of the energy  $\Delta E$ of the initial state and the ML bound  depends on the mean energy $\langle E \rangle$ with respect to the ground state, i.e.,  under the assumption that the energy of the ground state is  taken to be zero.
Levitin and Toffoli  proved that the combined bound of MT and ML is tight and can only be attained by two-level state  $\ket{\psi}=\frac{1}{\sqrt{2}}(\ket{E_{1}}+\e^{-i\theta} \ket{E_{2}})$ where $H \ket{E_{i}}=E_{i}\ket{E_{i}}$ and  $\theta$ is an arbitrary phase \cite{LevitinPRL2009}. They have also shown  that no mixed state can have a larger speed.

The QSL  was originally derived for the unitary evolution of pure and orthogonal states and is then generalized to the  arbitrary angles \cite{GiovannettiPRA2003}, mixed states \cite{GiovannettiPRA2003,WuPRA2018,CampaioliPRL2018}, and nonunitary evolutions \cite{TaddeiPRL2013,delCampoPRL2013,DeffnerPRL2013}.
There are also some researches   on the relationship between the QSL and the quantum phenomena such as the role of entanglement  in dynamical evolution \cite{GiovannettiEU2003,BatlePRA2005,ZanderJPA2007,FrowisPRA2012,BorrasPRA2006,KupfermanPRA2008}, non-Markovian effect of the environment on the speed of quantum evolution \cite{DeffnerPRL2013,CianciarusoPRA2017},
and dependence of the QSL on the initial state \cite{WuJPA2015,LiuPRA2015}.  Very recently, the QSL  has also been used to derive lower bounds on the minimal time required to vary (generation  and degradation) a quantum resource  \cite{CampaioliNJP2022}, the  informational measures  \cite{MohanNJP2022},  and  the correlations \cite{PandeyPRA2023} of quantum systems under some quantum dynamical process. It was also used to  interpret  the  geometric measure of multipartite entanglement for pure states as the minimal time necessary to unitarily evolve a given quantum state to a separable one \cite{RudnickiPRA2021}.

Among the three elements of $ \tau_{\textrm{QSL}}$, namely $\varrho_{0}$, $\varrho_{t}$ and $H$, in most QSL discussions it is assumed that the Hamiltonian of the system is given and  by fixing the distance between the initial and final states, the evolution time is bounded for any arbitrary state according to the chosen metric \cite{FreyQINP2016}. The MT and ML bounds are of this type.
The evolution time, however,  can be framed and studied in various other ways. In quantum control, for instance, the task is to search for the  optimal Hamiltonian  to drive  the physical system   between given initial and  final quantum states in the shortest time.  Finding the   time-optimal evolution is analogous to that for the brachistochrone in classical mechanics and determines the optimal Hamiltonian and the type of control as the  physical   resources needed  to carry out the process \cite{CarliniPRL2006,CarliniJPA2008,WangPRL2015,GengPRL2016,LamPRX2021}.

For the construction of QSL bounds,  choosing a measure of distinguishability is, in general, necessary in order to be able to define the quantum speed as the derivative of some distance \cite{DeffnerJPA2017}. Various metrics have been used to obtain  better bounds for QSL, for instance, del Campo \etal  used the relative purity \cite{delCampoPRL2013}, Deffner and Lutz worked with Bures angle \cite{DeffnerPRL2013}, and Campaioli \etal  defined a new distance based on the angle between generalized Bloch vectors  \cite{CampaioliPRL2018}. Pires \etal  have  shown that there is an infinite family of contractive metrics that can be used to obtain QSL bounds \cite{PiresPRX2016}. Such  bounds depend on the chosen metric and, in general, it is not easy to obtain a tight and attainable bound by the appropriate choice of the metrics. In more general
cases, QSLs are loose for mixed states of unitary evolutions or open dynamics \cite{CampaioliQ2019,PiresPRX2016}.   In particular,  for pure states both  MT and  ML QSL bounds are attainable, however,  for a general mixed state they can be rather loose.  For instance, a state that  is diagonal  in the energy eigenbasis does not evolve further, so  the speed of evolution is zero or, equivalently, $\tau_{\textrm{QSL}}$ is infinite.  However, in this case,  both the MT and ML lower bounds on $\tau_{\textrm{QSL}}$  can be arbitrarily small. This means,  for incoherent states in the energy eigenbasis,   $\Delta E$ and $\langle E \rangle  $ do not contain any  information about the speed of evolution \cite{MarvianPRA2016}.  In addition, any function that is used to  quantify the speed of evolution is expected to be nonincreasing under mixing, however, the standard MT bound does not have this property, because the uncertainty $\Delta E$ in general increases under mixing.
Motivated by these,  Marvian \etal have found  QSL bounds  in terms of the coherence of states relative to the energy eigenbasis \cite{MarvianPRA2016}.
Also, in  \cite{PiresPRX2016,MondalPLA2016}  the authors   obtained bounds for $\tau_{\textrm{QSL}}$ in terms of skew information, as a  measure of coherence.
Another problem related to QSL bounds is the explicit dependence of many   bounds on the time parameter \cite{ShaoPRR2020}.   For a noncontrolled fixed Hamiltonian  and by starting from a given   initial state  $\varrho_{0}$,   the dynamical trajectory  as well as the set of states satisfying the target angle is fixed, so that the QSL  should not  depend on the time either. Motivated by this, the authors of Ref. \cite{ShaoPRR2020} have defined the QSL time  as the minimum evolution time to fulfill the target angle $\Theta$ for any initial state $\varrho_{0}$  that can fulfill the target angle $\Theta$.

The above discussion  implies that the QSL requires more considerations, in particular for mixed states,  so  new approaches in the field can be effective and even necessary.  Our main goal in this study is to find a tight and maximum speed for mixed states of the isolated  systems. We survey directly on the evolution speed instead of the evolution time, therefore, we do not face the problem of the explicit dependence of the QSL on the time parameter.
The notion of speed depends on the choice of metric on the space of quantum states, each of which has its own figure of merit depending on the desired purpose and application. In this work, we adapt  the notion of speed based on the Euclidean metric defined naturally  on  the Hilbert-Schmidt  space in  the Bloch vector representation of the  quantum states \cite{BrodyPRR2019}. This choice of metric provides a simple form for speed in terms of  $\varrho$ instead of $\sqrt{\varrho}$ which appears in the squared speed based on the Wigner-Yanase (WY) skew information \cite{MondalPLA2016, PiresPRX2016}, so making   our calculations computable.  The defined  speed vanishes for all incoherent states, and even more, it is convex under mixing  thus speed does not increase under mixing.

We restrict  ourselves to the unitary evolution generated by the time-independent Hamiltonian. In this case,  the speed remains constant in time, regardless of what initial state the system starts in and how far the distance between $\varrho_{0}$ and $\varrho_{t}$ is. The speed therefore depends  only on the initial state. Our aim is  to express  this dependence in terms of purity; for instance, zero speed for maximally mixed state and maximum speed  for the pure state $\ket{\Psi}=\frac{1}{\sqrt{2}}(\ket{E_{\min}}+\e^{-i\theta}\ket{E_{\max}})$.  Motivated by this, we pursue then the question that among the set of all quantum states with a given definite   purity,  which  initial state represents the optimal one. By parameterizing  an arbitrary $d$-dimensional mixed state in terms of its  purity, we identify the state with  maximum speed for each given purity.   Such obtained optimal state provides a tight bound for the QSL induced by  the Euclidean metric. We apply our  optimal states to  the squared speed defined by the WY skew information \cite{MondalPLA2016, PiresPRX2016}. The numerical simulations show that the proposed  optimal state provides almost a faster squared speed, except for some region of low purity for which the speed exceeds the speed limit proposed by our optimal state. In such cases, however, our optimal state  can be exploited in order to derive  lower bounds for the maximum speed.

Our other important result is that although the off-diagonal entries of a quantum state  play a key role both in the quantum coherence of the state  and in the  speed of quantum evolution, their influences do not match with each other  in general. In other words,  despite the coherence of the initial state can speed up the quantum evolution process,  for the optimal states only the coherence arisen from the secondary diameter of the state can have  a nonzero contribution in the optimal speed, all other off-diagonal entries  have vanishing  contributions. Furthermore, for all density matrices with arbitrary  purity, the  entry $\varrho_{1d}$ always has a nonzero contribution on the optimal state. However, whether or not the other secondary diameter entries $\varrho_{i,d-i+1}$  take nonzero values depends on their  relative   energy gaps $|E_{d-i+1}-E_{i}|$.

For those dimensions $d$ characterizing  a bipartite $d_1\times d_2$ system  with $d=d_1d_2$, we investigate the behavior of the optimal speed with respect to the entanglement of the optimal state. In doing so, we suppose the energy eigenstates provide a product basis, and we find that except for the sufficiently low purities, the optimal state has nonzero entanglement. Our results show in particular that for the case of an $N$-qubit system for which $d=2^N$, the one-qubit reduced state of the optimal state, obtained by tracing out $N-1$ qubits,  is a maximally mixed state.  When  $d=4$, for instance,  the optimal states therefore form a special class of Bell-diagonal states, up to a local unitary transformation.       Furthermore,  for optimal states both resources are monotonic functions of purity, so  they can cause speed up quantum evolution leading to a smaller quantum speed limit.
\\
The paper is organized as follows. In Section \ref{Sec-Speed}, we first briefly review the squared  speed of evolution of a $d$-dimensional system  with respect to the Euclidean metric  defined in \cite{BrodyPRR2019}. Some related properties and preliminaries are given also in this section. Section \ref{Sec-Optimal} is devoted on the optimal speed of a mixed state with an arbitrary purity. Our main results, namely Proposition  \ref{Prop-Xstate} and Theorems  \ref{Theorem-Opt-k0}, \ref{Theorem-Opt-Main-1} and \ref{Theorem-Opt-Main-2} are presented in this section, however, some of their proofs can be found   in Appendices \ref{Appendix-Prop-Xstate} and \ref{Appendix-Theorem-Opt-Main-1,2}.     A simulation evidence of our results is provided in Sec. \ref{Sec-Simulation}.  In this section we also apply our optimal states to the squared speed defined by WY skew information. The role of entanglement and coherence of the optimal states are discussed in Sec. \ref{Sec-Discussion}.   The paper is concluded with a brief review in Sec. \ref{Sec-Conclusion}.

\section{Speed of quantum evolution}\label{Sec-Speed}
An arbitrary  quantum state  $\varrho$ acting on the  $d$-dimensional Hilbert space  $\mathcal{H}$ can be written in the Bloch vector representation as
\begin{equation}\label{eq10.1}
\varrho=\frac{1}{\sqrt{d}}\sigma_0 +\sum_{j=1}^{d^2-1} r_j \sigma_j,
\end{equation}
where $\sigma_0=\Id/\sqrt{d}$, and  $\{\sigma_j\}_{j=1}^{d^2-1}$ are Hermitian traceless orthonormal basis,
with the Hilbert-Schmidt inner product  $\langle \sigma , \tau\rangle_{\textrm{HS}} =\Tr[\sigma^{\dagger} \tau]$. In this representation $r_j=\Tr[\varrho\sigma_j]$, and  $\boldsymbol{r}=(r_1,\cdots,r_{d^2-1})^{\T}\in\mathbb{R}^{d^2-1}$  is   the Bloch vector corresponding to the state $\varrho$.

Starting from  the initial state $\varrho_0$,  consider a one-parameter family of states $\varrho_t$  generated by an  open system dynamics
as
\begin{eqnarray}\label{eq10.2}
\partial_t \varrho =\l \varrho =-i [H,\varrho] +\d \varrho,
\end{eqnarray}
where the dissipator term
\begin{equation}\label{eq10.5}
\d \varrho =\sum_k \Big[ L_k \varrho L_k^{\dagger} -\frac12 (L_k^{\dagger} L_k \varrho +\varrho L_k ^{\dagger} L_k)\Big],
\end{equation}
is responsible for the open dynamics of the system.
Based on the above dynamics, the authors of Ref. \cite{BrodyPRR2019} obtained the  squared speed of evolution with respect to the Euclidean metric as
\begin{equation}\label{eq10.3}
v^2(\varrho)=\sum_{j=1}^{d^2-1} \dot{r}_j^2 =\Tr[(\l \varrho)^2],
\end{equation}
which   can be written as \cite{BrodyPRR2019}
\begin{eqnarray}\label{SqSpeed}
v^2(\varrho)=2 \Tr[H[H,\varrho]\varrho]+\Tr[(\d \varrho)^2]
- 2i  \Tr[\varrho[\d \varrho, H]].
\end{eqnarray}
Clearly, the first and the second terms arise  purely from unitary and dissipative dynamics of the evolution, respectively.
The third  term, on the other hand,  comes from incompatibility between the Hamiltonian $H$ and the set of Lindblad operators $\{L_k\}$ \cite{BrodyPRR2019}.

\subsection{Speed of unitary evolution}\label{se11}
In this work,  we consider only unitary evolutions. We also assume Hamiltonian is \textit{time-independent}. With these assumptions, the squared speed for a closed system becomes time-independent and depends only on the \textit{initial state}. So, if we find the optimal  state initially, it remains optimal at later times as well.

Let  $\{\ket{E_n}\}_{n=1}^{d}$ denotes  the orthonormal eigenbasis of the Hamiltonian $H$ corresponding to the eigenenergies $\{E_n\}_{n=1}^{d}$, i.e. $H=\sum_{n=1}^{d}E_n\ket{E_n}\bra{E_n}$. In this basis,   $\varrho$ is represented by $\varrho=\sum_{i,j=1}^{d}\varrho_{ij} \ket{E_i}\bra{E_j}$ and  Eq. \eqref{SqSpeed} reads
\begin{eqnarray} \nonumber
v^2(\varrho)&=& 2 \left\{\Tr[H^2\varrho^2]-\Tr[H\varrho H\varrho]\right\} \\ \label{SqSpeed-Unitary-1}
&=& \sum_{i,j} |\varrho_{ij}|^2 \omega_{ij}^2.
\end{eqnarray}
Above,  in the natural units with $\hbar=1$, we have defined  $\omega_{ij}=(E_j-E_i)>0$ for $1\le i< j\le d$ as  the Bohr frequency corresponding to the transition  between lower  and upper levels $E_{i}$ and  $E_{j}$, respectively. Clearly, $\omega_{ij}$ obeys the sum rule $\omega_{ij}=\sum_{k=i}^{j-1}\omega_{k,k+1}$.
It follows from above relation that   \textit{only} the off-diagonal elements of $\varrho$, in the Hamiltonian bases, contribute to the speed of unitary evolution.
Before we proceed further, it may be relevant to point out some  properties of the squared speed.

\subsection{Properties of the squared speed}
\begin{lemma}
Let $\mathcal{H}_1$ and $\mathcal{H}_2$ be two complementary orthogonal subspaces of $\mathcal{H}$  on which  the Hamiltonian $H$ is partitioned into $H_1$ and $H_2$,  i.e.,  $\mathcal{H}=\mathcal{H}_1 \oplus \mathcal{H}_2$ and $H=H_1+ H_2$. Then for any pair of orthogonal states $\varrho_1$ and $\varrho_2$  acting  on $\mathcal{H}_1$ and $\mathcal{H}_2$, respectively,  we have
\begin{equation}\label{SqSpeed-DirectSum}
v^{2}(\lambda \varrho_1+(1-\lambda)\varrho_2)=\lambda^2 v^{2}(\varrho_1)+(1-\lambda)^2 v^2(\varrho_2).
\end{equation}
\end{lemma}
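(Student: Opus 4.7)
The plan is to work in a convenient eigenbasis of $H$ and apply \eqref{SqSpeed-Unitary-1} directly. Because $\mathcal{H}=\mathcal{H}_1\oplus\mathcal{H}_2$ and $H=H_1+H_2$ with each $H_i$ supported on $\mathcal{H}_i$, the two summands commute and annihilate each other's support, so one can diagonalise them simultaneously. I would therefore pick an orthonormal eigenbasis $\{\ket{E_n}\}_{n=1}^d$ of $H$ whose first $d_1=\dim\mathcal{H}_1$ vectors span $\mathcal{H}_1$ and whose remaining $d-d_1$ vectors span $\mathcal{H}_2$.

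The next step exploits the support hypothesis. Since $\varrho_1$ and $\varrho_2$ live on orthogonal subspaces (which is precisely the content of their being ``orthogonal states''), the matrix of $\varrho\equiv\lambda\varrho_1+(1-\lambda)\varrho_2$ in this basis is block-diagonal, with blocks $\lambda\varrho_1$ on $\mathcal{H}_1$ and $(1-\lambda)\varrho_2$ on $\mathcal{H}_2$. Hence every cross entry $\varrho_{ij}$ with one index $\le d_1$ and the other $>d_1$ vanishes identically. This is the decisive observation: although the Bohr frequencies $\omega_{ij}$ connecting the two subspaces are in general nonzero, their contribution to \eqref{SqSpeed-Unitary-1} is killed by the vanishing matrix elements, not by the spectrum.

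Substituting the block-diagonal form into \eqref{SqSpeed-Unitary-1} and splitting the double sum into the three index regions (both $\le d_1$, both $>d_1$, and mixed) yields
\begin{equation*}
v^2(\varrho)=\lambda^2\!\!\sum_{i,j\le d_1}\!\!|(\varrho_1)_{ij}|^2\omega_{ij}^2+(1-\lambda)^2\!\!\sum_{i,j>d_1}\!\!|(\varrho_2)_{ij}|^2\omega_{ij}^2,
\end{equation*}
with no mixed contribution. Recognising the two remaining sums as $v^2(\varrho_1)$ and $v^2(\varrho_2)$ respectively (using \eqref{SqSpeed-Unitary-1} applied on each subspace, which is legitimate because the chosen basis restricts to an eigenbasis of $H_i$ on $\mathcal{H}_i$) delivers \eqref{SqSpeed-DirectSum}.

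There is no serious obstacle; the argument reduces to bookkeeping once the adapted basis is fixed. The only point I would emphasise in the write-up is the origin of the \emph{quadratic} weights $\lambda^2$ and $(1-\lambda)^2$: because $v^2$ is quadratic in the matrix entries of $\varrho$, the prefactors $\lambda$ and $1-\lambda$ that scale each block enter squared, which is precisely what makes $v^2$ strictly convex under such ``orthogonal'' mixing and foreshadows the convexity/monotonicity of the speed that the paper builds on in later sections.
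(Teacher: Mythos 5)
Your proof is correct and takes essentially the same route as the paper: a direct substitution into Eq.~\eqref{SqSpeed-Unitary-1} in which the cross terms between the two subspaces vanish because the supports of $\varrho_1$ and $\varrho_2$ are orthogonal. The only (cosmetic) difference is that the paper works with the basis-free trace form in the first line of Eq.~\eqref{SqSpeed-Unitary-1} together with the identity $\varrho_i H_j=0$ for $i\ne j$, whereas you carry out the same bookkeeping in the matrix-element form using an eigenbasis of $H$ adapted to the direct sum.
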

\begin{proof}
Putting $\varrho=\lambda\varrho_{1}+(1-\lambda)\varrho_{2}$ and $H=H_1+H_2$ in the first line of Eq. \eqref{SqSpeed-Unitary-1}, and using the fact that $\varrho_i H_j=0$ for $i\ne j$,  one can easily obtain the equation above.
\end{proof}
\begin{lemma}
The squared speed of unitary evolution is a convex function of  $\varrho$, i.e., for any   pair  of states $\varrho_1$ and $\varrho_2$ and an arbitrary $0\le \lambda\le 1$, we have
\begin{equation}\label{SqSpeed-Convexity-1}
v^{2}(\lambda \varrho_1+(1-\lambda)\varrho_2)\le\lambda v^{2}(\varrho_1)+(1-\lambda)v^2(\varrho_2).
\end{equation}
Moreover, the inequality   is saturated if and only if,  in the Hamiltonian basis, the  off-diagonal elements of $\varrho_1$ and $\varrho_2$  are likewise equal,  $(\varrho_1)_{ij}=(\varrho_2)_{ij}$ for $i\ne j$.
\end{lemma}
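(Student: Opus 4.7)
The plan is to reduce the convexity claim to the elementary convexity of $|z|^2$ on $\mathbb{C}$, applied entry-by-entry in the Hamiltonian eigenbasis, and then reconstruct the equality condition from the strict convexity of the same function.

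First, I would invoke the explicit formula already derived in Eq. \eqref{SqSpeed-Unitary-1}, namely $v^2(\varrho)=\sum_{i,j}|\varrho_{ij}|^2\omega_{ij}^2$, expressed in the eigenbasis of $H$. Because the map $\varrho\mapsto\varrho_{ij}=\bra{E_i}\varrho\ket{E_j}$ is linear, for any convex combination one has
\begin{equation}\nonumber
\bigl(\lambda\varrho_1+(1-\lambda)\varrho_2\bigr)_{ij}=\lambda(\varrho_1)_{ij}+(1-\lambda)(\varrho_2)_{ij}.
\end{equation}
The key analytic input is the elementary identity
\begin{equation}\nonumber
\lambda|z_1|^2+(1-\lambda)|z_2|^2-|\lambda z_1+(1-\lambda)z_2|^2=\lambda(1-\lambda)|z_1-z_2|^2,
\end{equation}
valid for all $z_1,z_2\in\mathbb{C}$ and $\lambda\in[0,1]$. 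This shows that $z\mapsto|z|^2$ is convex, with strict inequality whenever $0<\lambda<1$ and $z_1\ne z_2$.

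Next I would apply this pointwise with $z_1=(\varrho_1)_{ij}$ and $z_2=(\varrho_2)_{ij}$, multiply by the nonnegative weight $\omega_{ij}^2$, and sum over all index pairs. A nonnegative weighted sum of convex functions is convex, which delivers \eqref{SqSpeed-Convexity-1} immediately. Equivalently, the difference $\lambda v^2(\varrho_1)+(1-\lambda)v^2(\varrho_2)-v^2(\lambda\varrho_1+(1-\lambda)\varrho_2)$ equals $\lambda(1-\lambda)\sum_{i,j}\omega_{ij}^2|(\varrho_1)_{ij}-(\varrho_2)_{ij}|^2\ge 0$, giving the inequality in a single line.

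For the equality clause, this same expression is a sum of nonnegative terms, so it vanishes precisely when $\omega_{ij}^2|(\varrho_1)_{ij}-(\varrho_2)_{ij}|^2=0$ for every pair $(i,j)$. Since $\omega_{ii}=0$ the diagonal entries are unconstrained (they do not enter $v^2$ at all), whereas for $i\ne j$ we have $\omega_{ij}\ne 0$ under the standing non-degeneracy assumption on the spectrum of $H$, forcing $(\varrho_1)_{ij}=(\varrho_2)_{ij}$. This is exactly the stated saturation condition. There is no real obstacle here: the only subtlety worth flagging is that the argument tacitly uses nondegeneracy of $H$, since otherwise off-diagonal elements within a degenerate eigenspace drop out of $v^2$ and need not agree for equality to hold.
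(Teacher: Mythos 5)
Your argument is correct and is essentially the paper's own proof written out in coordinates: the scalar identity $\lambda|z_1|^2+(1-\lambda)|z_2|^2-|\lambda z_1+(1-\lambda)z_2|^2=\lambda(1-\lambda)|z_1-z_2|^2$, summed over entries with weights $\omega_{ij}^2$, is exactly the paper's Hilbert--Schmidt identity applied to $A=[H,\varrho_1]$ and $B=[H,\varrho_2]$, because $([H,\sigma])_{ij}=(E_i-E_j)\sigma_{ij}$ gives $\sum_{i,j}\omega_{ij}^2|(\varrho_1-\varrho_2)_{ij}|^2=\|[H,\varrho_1-\varrho_2]\|_{\textrm{HS}}^2$. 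Your closing caveat about degeneracy is also well taken: the paper's saturation condition $[H,\varrho_1-\varrho_2]=0$ forces equality of all off-diagonal entries only when the spectrum of $H$ is nondegenerate, an assumption the paper leaves implicit in taking $\omega_{ij}>0$ for $i<j$.
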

\begin{proof}
To prove the lemma, note that  the first line of Eq. \eqref{SqSpeed-Unitary-1}  can be written also as
\begin{equation}\label{SqSpeed-Unitary-3}
v^{2}(\varrho)=-\Tr\left[ H,\varrho\right] ^{2}=\| [H, \varrho ] \|_{\textrm{HS}}^2,
\end{equation}
where   $\|A\|_{\textrm{HS}}=\{\Tr[A^\dagger A]\}^{1/2}$ is  the Hilbert-Schmidt norm induced from the Hilbert-Schmidt inner product $\langle A, B\rangle_{\textrm{HS}}=\Tr[A^\dagger B]$. In view of this, the convexity of the squared speed follows from the convexity of the  Hilbert-Schmidt norm, i.e., for any pair of operators $A$ and $B$,  and  $0\le \lambda\le 1$,  we have
\begin{eqnarray}\label{SqSpeed-Convexity-3}
\lambda \| A \|_{\textrm{HS}}^2+(1-\lambda)\| B \|_{\textrm{HS}}^2 &-&\| \lambda A+(1-\lambda) B \|_{\textrm{HS}}^2 \\ \nonumber
&=&\lambda(1-\lambda) \|A-B \|_{\textrm{HS}}^2\ge 0,
\end{eqnarray}
Inserting   $A=[H, \varrho_1 ]$ and $B=[H, \varrho_2 ]$, we arrive at our  assertion   \eqref{SqSpeed-Convexity-1}.

From  $A-B=[H,\varrho_1-\varrho_2]$, it follows  that the inequality  \eqref{SqSpeed-Convexity-3} is saturated if and only if $[H,\varrho_1-\varrho_2]=0$, i.e.,  if and only if $\varrho_1-\varrho_2$ is diagonal in the Hamiltonian basis, implying that the  off-diagonal elements of $\varrho_1$ and $\varrho_2$  are likewise equal,  $(\varrho_1)_{ij}=(\varrho_2)_{ij}$ for $i\ne j$. Note that, as only off-diagonal elements of a density matrix play a role in the speed,  we have  therefore $v^{2}(\varrho_1)=v^{2}(\varrho_2)$, however, the converse is not true in general, i.e., it may happens that $v^{2}(\varrho_1)=v^{2}(\varrho_2)$ but the inequality  \eqref{SqSpeed-Convexity-3} is not saturated.
\end{proof}
As we mentioned previously in Sec. \ref{Sec-Introduction},  this feature of convexity is crucial for any function that is used to quantify the speed of evolution in order to be nonincreasing under mixing \cite{MarvianPRA2016}. In Sec. \ref{Sec-Optimal} we see that the optimal speed is a monotonic function of the purity of the state.
\begin{lemma}\label{Prop-SqSpeed-UBound}
For a general state $\varrho$, the squared speed is bounded from above by
\begin{eqnarray}\label{SqSpeed-Unitary-3}
v^2(\varrho)  \le   2\left(\Delta H\right)^2,
\end{eqnarray}
where  $(\Delta H)^2=\left<H^2\right>-\left<H\right>^2$ is the variance of $H$.
Moreover, the inequality is saturated if and only if the state is pure (see \cite{GessnerPRA2018} for a same relation for quantum Fisher information).
\end{lemma}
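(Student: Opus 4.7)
The plan is to work in the eigenbasis of $H$. Equation~(12) already expresses $v^2(\varrho)=2\sum_{i<j}|\varrho_{ij}|^2(E_j-E_i)^2$, so if I can rewrite $2(\Delta H)^2$ as a parallel pairwise sum, the whole statement will reduce to a well-known consequence of the positive semidefiniteness of $\varrho$.

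Setting $p_i=\varrho_{ii}$ and invoking the standard rewrite of the variance of a discrete distribution gives $(\Delta H)^2=\langle H^2\rangle-\langle H\rangle^2=\sum_{i<j}p_i p_j(E_i-E_j)^2$. The inequality of the lemma then becomes
\begin{equation*}
\sum_{i<j}\bigl[\varrho_{ii}\varrho_{jj}-|\varrho_{ij}|^2\bigr](E_j-E_i)^2\ge 0,
\end{equation*}
which I would establish termwise: each weight $(E_j-E_i)^2$ is nonnegative, and each bracket is the determinant of the $2\times 2$ principal submatrix of $\varrho$ supported on rows and columns $i,j$, hence nonnegative since $\varrho\ge 0$.

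For the equality statement, one direction is immediate: direct substitution of $\varrho=|\psi\rangle\langle\psi|$ into Eq.~(10) yields $2\langle\psi|H^2|\psi\rangle-2\langle\psi|H|\psi\rangle^2=2(\Delta H)^2$, so purity gives saturation. For the reverse implication, under the generic assumption that $H$ has a non-degenerate spectrum every weight $(E_j-E_i)^2$ is strictly positive, so saturation forces $|\varrho_{ij}|^2=\varrho_{ii}\varrho_{jj}$ for all pairs $i<j$; together with $\varrho\ge 0$ this forces every $2\times 2$ principal submatrix to have rank at most one, which in turn forces $\varrho$ itself to have rank one, i.e.\ to be pure.

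The only subtlety I anticipate concerns the equality analysis when $H$ is degenerate: if two energies coincide, the corresponding weight vanishes, and any state supported inside a single eigenspace of $H$ makes both sides vanish while being mixed. The cleanest way to handle this is to state the lemma under the usual non-degeneracy assumption (consistent with the rest of the paper and with the analogous quantum Fisher information bound of \cite{GessnerPRA2018}), and to note in a short remark that in the degenerate case saturation is equivalent to purity on each distinct-energy subspace. An alternative route would combine the convexity of $v^2$ from the previous lemma with the concavity of the variance along pure-state decompositions, but the equality analysis there is less transparent, so I would favor the direct pairwise approach above.
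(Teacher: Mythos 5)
Your proof follows essentially the same route as the paper's: the paper likewise writes $(\Delta H)^2=\sum_{i<j}\varrho_{ii}\varrho_{jj}\,\omega_{ij}^2$ and recasts the speed as $v^2(\varrho)=2(\Delta H)^2-2\sum_{i<j}\mu_{ij}\omega_{ij}^2$ with $\mu_{ij}=\varrho_{ii}\varrho_{jj}-|\varrho_{ij}|^2\ge 0$ being exactly your $2\times 2$ principal minors, concluding termwise from $\varrho\ge 0$. If anything, your handling of the saturation condition is the more careful one: the paper simply asserts ``saturated iff pure'' via ``pure iff $\mu_{ij}=0$ for all $i,j$,'' whereas you correctly note that for a degenerate $H$ some weights $\omega_{ij}^2$ vanish, so a mixed state supported inside a single eigenspace also saturates the bound and the ``only if'' direction genuinely requires the non-degeneracy caveat you state.
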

\begin{proof}
In the Hamiltonian basis, the variance is expressed as  $(\Delta H)^2=\sum_{i<j}\varrho_{ii}\varrho_{jj}\omega_{ij}^2$. Using this and  defining
\begin{equation}\label{Mu-ij}
\mu_{ij}=\varrho_{ii}\varrho_{jj}- \vert \varrho_{ij} \vert ^{2},
\end{equation}
for $1\le i< j\le d$,  Eq. \eqref{SqSpeed-Unitary-1} takes the following form
\begin{eqnarray}\label{SqSpeed-Unitary-4}
v^2(\varrho) = 2\left(\Delta H\right)^2-2\sum_{i<j}\mu_{ij}\omega_{ij}^2.
\end{eqnarray}
As $\varrho\ge 0$, then $\mu_{ij}\ge 0$ for all $i,j=1,\cdots,d$, leads therefore to  the inequality \eqref{SqSpeed-Unitary-3}.  The proof is complete if we recall  that a state is pure if and only if $\mu_{ij}=0$ for all $i,j=1,\cdots,d$.
\end{proof}

\section{Optimal-speed quantum states}\label{Sec-Optimal}
Under  the Hamiltonian $H$, how fast can a  quantum state  evolve in time? In particular, which  of the pure states is the fastest one or, even more, over the set of  density matrices with equal purity, which states have  the optimum speed? In what follows, we are going to address  these questions by developing a  framework to obtain the optimal state having definite  purity.
\subsection{Optimal pure quantum states}
According to the Lemma \ref{Prop-SqSpeed-UBound}, the squared speed of a pure state $\varrho=\ket{\psi}\bra{\psi}$ reduces  to
\begin{eqnarray}\label{SqSpeed-Unitary-Pure}
v^2(\varrho)=2 \left(\Delta H\right)^2.
\end{eqnarray}
In view of this, the following proposition introduces  the states for which the energy variance is maxima \cite{TextorIJTP1978,Sakmann2011PRA}, so the  squared speed is maximum.
\begin{proposition}\label{Prop-SqSpeed-Unitary-Pure}
Let  $\{\ket{E_1},\ket{E_2},\cdots,\ket{E_d}\}$ denotes  the orthonormal eigenbasis of the Hamiltonian $H$, corresponding  to the nondecreasing ordered  eigenvalues $E_{1}\le E_2\le \cdots \le E_{d}$.  Then the pure state $\ket{\Psi}=\sum_{n=1}^{d}\alpha_n\ket{E_n}$ makes the energy variance maximum if
$\ket{\Psi}=\frac{1}{\sqrt{2}}\left(\ket{E_{\min}}+\e^{-i\theta}\ket{E_{\max}}\right)$, where $\ket{E_{\min}}$ and $\ket{E_{\max}}$ are two states in the eigensubspaces of minimum and maximum energies $E_1$ and $E_d$, respectively. Accordingly, the maximum value of the variance of $H$ and its corresponding  maximum squared speed are  $\Delta H_{\max}^2=\frac{1}{4}\left( E_{1}-E_{d}\right) ^{2}=\frac{1}{4}\omega_{1d}^2$ and $v^2_{\max}(\Psi)=\frac{1}{2}\omega_{1d}^2$, respectively.   Obviously, when both $E_1$ and $E_d$ are nondegenerate,  the maximum variance  happens for $\ket{E_{\min}}=\ket{E_{1}}$ and $\ket{E_{\max}}=\ket{E_{d}}$, i.e. when $\vert \alpha_{1}\vert^{2}=\vert \alpha_{d}\vert^{2}=\frac{1}{2}$.
\end{proposition}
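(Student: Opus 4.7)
The plan is to reduce the maximization to a statement about a classical discrete random variable. Writing $p_n = |\alpha_n|^2$, I observe that $\Delta H^{2} = \sum_{n} p_n E_n^{2} - \bigl(\sum_n p_n E_n\bigr)^{2}$ depends only on the probabilities, not on the relative phases of the $\alpha_n$. So the phase $\theta$ in the statement is free, and the problem becomes: maximize the variance of a random variable that takes value $E_n$ with probability $p_n$, where the support lies in $[E_1,E_d]$. This is exactly the setting of Popoviciu's inequality.

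Next, I would prove the sharp bound $\Delta H^{2}\le \tfrac{1}{4}\omega_{1d}^{2}$ directly from the pointwise inequality $(E_n-E_1)(E_d-E_n)\ge 0$. Taking the expectation with respect to $\{p_n\}$ gives
\begin{equation}
\langle H^{2}\rangle \le (E_1+E_d)\langle H\rangle - E_1 E_d,
\end{equation}
hence
\begin{equation}
\Delta H^{2} \le (E_d-\langle H\rangle)(\langle H\rangle - E_1) \le \tfrac{1}{4}(E_d-E_1)^{2},
\end{equation}
the last step by AM--GM. This chain is saturated iff (i) $(E_n-E_1)(E_d-E_n)=0$ whenever $p_n>0$, so the distribution is supported only on the extremal eigenvalues, and (ii) $\langle H\rangle=(E_1+E_d)/2$, which together with (i) forces equal weight $1/2$ on each extreme. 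When $E_1$ and $E_d$ are nondegenerate this uniquely picks $|\alpha_1|^{2}=|\alpha_d|^{2}=\tfrac{1}{2}$; in the degenerate case any normalized vectors $|E_{\min}\rangle,|E_{\max}\rangle$ in the respective eigenspaces work, which is precisely the form stated in the proposition.

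Finally, I would substitute the optimum into the pure-state speed formula $v^{2}(\varrho)=2(\Delta H)^{2}$ from Eq.~\eqref{SqSpeed-Unitary-Pure} (itself an immediate consequence of Lemma~\ref{Prop-SqSpeed-UBound} in the pure-state case) to read off $v^{2}_{\max}(\Psi)=\tfrac{1}{2}\omega_{1d}^{2}$.

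There is no real obstacle here: Popoviciu's variance bound is textbook, and the quantum content is carried entirely by the identification $p_n=|\alpha_n|^{2}$ and by Lemma~\ref{Prop-SqSpeed-UBound}. The only subtlety worth flagging is the degenerate case, where the labels $|E_{\min}\rangle$ and $|E_{\max}\rangle$ are ambiguous; the equality analysis above shows that any choice inside the extremal eigenspaces, combined with any phase $\theta$, gives the same maximal variance, which is why the proposition's statement is phrased with $|E_{\min}\rangle,|E_{\max}\rangle$ rather than $|E_1\rangle,|E_d\rangle$.
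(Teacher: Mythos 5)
Your argument is correct, and it is essentially a self-contained proof of a statement for which the paper itself offers no proof: the text simply cites Textor's maximum-variance theorem and Sakmann \emph{et al.}, and later remarks that the proposition is ``retrieved by a new approach'' as the $\kappa\to 1$ limit of the Lagrange-multiplier framework of Theorems \ref{Theorem-Opt-k0}--\ref{Theorem-Opt-Main-2} (where the optimal state collapses onto $P_{\Psi_1}$). Your route --- reduce to the classical variance of the spectral distribution $p_n=|\alpha_n|^2$, apply the pointwise inequality $(E_n-E_1)(E_d-E_n)\ge 0$ to get $\Delta H^2\le (E_d-\langle H\rangle)(\langle H\rangle-E_1)\le \tfrac14\omega_{1d}^2$, and then read off the equality conditions --- is the elementary Popoviciu argument, and it buys something the paper's treatment does not make explicit: a complete characterization of \emph{all} maximizers (support confined to the extremal eigenspaces, equal weights, arbitrary relative phase), which justifies the $\ket{E_{\min}},\ket{E_{\max}}$ phrasing in the degenerate case. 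The paper's framework, by contrast, is built for the harder mixed-state problem and only recovers the pure-state answer as a boundary case. Two small remarks: the chain $\langle H^2\rangle\le (E_1+E_d)\langle H\rangle-E_1E_d$ and the subsequent factorization are exactly right, and your use of Eq.~\eqref{SqSpeed-Unitary-Pure} to convert $\Delta H^2_{\max}$ into $v^2_{\max}=\tfrac12\omega_{1d}^2$ matches the paper; you might only add the trivial caveat that if $E_1=E_d$ the Hamiltonian is proportional to the identity and every state attains the (zero) maximum, so the equal-weight conclusion presupposes $E_1<E_d$.
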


\subsection{Optimal mixed quantum states}
For a general quantum state $\varrho$, acting on the $d$-dimensional Hilbert space $\mathcal{H}$, one can define the purity as $\kappa=\Tr[\varrho^2]$, where ranges from $\kappa=1/d$ for the maximally mixed state $\varrho_0=\Id/d$ to $\kappa=1$ for an arbitrary pure state corresponding to the projection operator  $P_{\psi}=\ket{\psi}\bra{\psi}$. In what follows, we are looking for a class of mixed states $\varrho_{\kappa}$, parameterized by the associated  purity $\kappa\in[1/d,1]$, such that the squared speed is maximal. For $\kappa=1$, the result of the Proposition \ref{Prop-SqSpeed-Unitary-Pure} is  retrieved by a new approach.

In  energy eigenbasis, the purity of a general state $\varrho_{\kappa}=\sum_{ij}\varrho_{ij} \ket{E_i}\bra{E_j}$ can be  expressed as
\begin{eqnarray}\label{Purity-1}
\kappa=\sum_{i=1}^{d} \varrho_{ii}^{2}+\sum_{i\neq j}^{d} \vert \varrho_{ij} \vert ^{2}.
\end{eqnarray}
For the maximally mixed state for which the speed is zero, we have  $\kappa=1/d$. It follows therefore that in order to have nonzero speed, we have to  increase the purity of   state. Our goal is  to increase the speed by increasing  the purity in an optimum manner.
A systematic framework to obtain the optimal states of a  $d$-dimensional system is presented by  Theorems  \ref{Theorem-Opt-k0}, \ref{Theorem-Opt-Main-1} and \ref{Theorem-Opt-Main-2} below. Their structural form, however, is given by the following proposition.

\begin{proposition}\label{Prop-Xstate}
For an arbitrary dimension $d$, the optimum state is given by a persymmetric  $X$-state, i.e., a $X$-state which is symmetric with respect to the secondary diagonal so that  $\varrho_{ij}=\varrho_{d-j+1,d-i+1}$ for all $i,j$
\begin{equation}\label{Optimal-X-state}
\varrho_{\kappa}=\begin{pmatrix}
\varrho_{11} & 0   & \cdots & 0 &   \varrho_{1d} \\
 0 & \varrho_{22} &  \cdots &   \varrho_{2,d-1} & 0 \\
 \vdots & \vdots & \ddots & \vdots &   \vdots \\
0 & \varrho_{2,d-1}^\ast &  \cdots & \varrho_{22} & 0 \\
\varrho_{1d}^\ast & 0  & \cdots & 0  & \varrho_{11} \\
\end{pmatrix}.
\end{equation}
Here,  in the innermost we have   $\varrho_{\frac{d+1}{2},\frac{d+1}{2}}$ when $d$ is odd, and a two-dimensional persymmetric matrix
when $d$ is even.
\end{proposition}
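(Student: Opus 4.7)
The plan is to maximize $v^2(\varrho)$ over density matrices of fixed purity $\kappa$ by first reducing the feasible set to X-shaped matrices and then enforcing persymmetry on the survivors. Working throughout in the Hamiltonian eigenbasis makes the objective $v^2(\varrho)=2\sum_{i<j}|\varrho_{ij}|^{2}\omega_{ij}^{2}$ from Eq.~\eqref{SqSpeed-Unitary-1} and the purity $\kappa=\sum_{i}\varrho_{ii}^{2}+2\sum_{i<j}|\varrho_{ij}|^{2}$ from Eq.~\eqref{Purity-1} both explicit in the matrix entries, with $\Tr[\varrho]=1$ and $\varrho\ge 0$ supplying the remaining feasibility constraints.

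The first step is to show that any off-diagonal entry $\varrho_{ij}$ with $j\neq d-i+1$ vanishes at the optimum. Because the eigenenergies are nondecreasing, a direct check gives $\omega_{ij}\le\max\{\omega_{i,d-i+1},\,\omega_{d-j+1,j}\}$, so at least one secondary-diagonal frequency dominates $\omega_{ij}$; indeed, if $j<d-i+1$ then $E_{j}\le E_{d-i+1}$ implies $\omega_{ij}\le\omega_{i,d-i+1}$, while if $j>d-i+1$ then $E_{d-j+1}\le E_{i}$ implies $\omega_{ij}\le\omega_{d-j+1,j}$. Assuming the contradiction hypothesis $\varrho_{ij}\neq 0$, I would perform a first-order exchange that decreases $|\varrho_{ij}|^{2}$ by $\epsilon$ and raises the dominating $|\varrho_{i,d-i+1}|^{2}$ (or its counterpart) by the same $\epsilon$. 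By construction this keeps $\kappa$ invariant and strictly raises $v^2$ to first order, contradicting optimality --- provided the perturbed matrix remains positive semidefinite.

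Once reduced to an X-state, the density matrix decouples into $\lfloor d/2\rfloor$ independent $2\times 2$ blocks, plus a $1\times 1$ middle block when $d$ is odd. For each block indexed by $(i,d-i+1)$ with diagonal entries $a_{i}=\varrho_{ii}$, $b_{i}=\varrho_{d-i+1,d-i+1}$, and off-diagonal element $r_{i}=\varrho_{i,d-i+1}$, the block contributes $2|r_{i}|^{2}\omega_{i,d-i+1}^{2}$ to $v^2$ and $a_{i}^{2}+b_{i}^{2}+2|r_{i}|^{2}$ to $\kappa$, subject only to the positivity ceiling $|r_{i}|^{2}\le a_{i}b_{i}$. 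Fixing the block trace $s_{i}=a_{i}+b_{i}$, the AM--GM inequality simultaneously maximizes the allowed $a_{i}b_{i}=s_{i}^{2}/4$ and minimizes the diagonal drain $a_{i}^{2}+b_{i}^{2}=s_{i}^{2}/2$, each attained at the balanced choice $a_{i}=b_{i}=s_{i}/2$. This is exactly the persymmetric condition $\varrho_{ii}=\varrho_{d-i+1,d-i+1}$, and it lets the X-state carry the largest possible off-diagonal weight for a given purity budget.

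The main obstacle I foresee is the positivity bookkeeping in the first step, since perturbing a non-X off-diagonal entry need not preserve $\varrho\ge 0$: higher-order principal minors can be violated and the exchange may have to be accompanied by compensating adjustments of other entries. I would handle this either through a KKT analysis with a positive-semidefinite matrix multiplier for the cone constraint $\varrho\ge 0$, or by an explicit continuous deformation staying within the tangent cone of $\{\varrho\ge 0\}$ to first order. The detailed determination of the block traces $\{s_{i}\}$ and magnitudes $\{|r_{i}|\}$ that make the persymmetric X-state actually optimal for a given $\kappa$ is then deferred to Theorems~\ref{Theorem-Opt-k0}, \ref{Theorem-Opt-Main-1}, and \ref{Theorem-Opt-Main-2}.
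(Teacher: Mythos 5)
Your overall architecture (first kill every entry off the secondary diagonal, then balance each $2\times 2$ block) matches the paper's, and your second step is essentially the paper's own argument: in Appendix \ref{Appendix-Theorem-Opt-Main-1} the bound $|\varrho_{1d}|^{2}\le\varrho_{11}\varrho_{dd}$ is exploited by choosing $\varrho_{11}=\varrho_{dd}$, which is exactly your AM--GM balancing. The problem is your first step. The exchange ``decrease $|\varrho_{ij}|^{2}$ by $\epsilon$, increase the dominating $|\varrho_{i,d-i+1}|^{2}$ by $\epsilon$'' is infeasible precisely at the configurations where you need it: for $\kappa\ge\kappa_0$ the optimal state saturates $|\varrho_{i,d-i+1}|^{2}=\varrho_{ii}\varrho_{d-i+1,d-i+1}$ (this is what removes $\ket{\Psi_i^{\perp}}$ from the support), so the receiving entry has no room and your first-order move exits the positive cone. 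You flag this as ``positivity bookkeeping,'' but it is not bookkeeping --- it is the entire content of the step, since a feasible rearrangement must simultaneously shift diagonal weight from the inner levels onto $\varrho_{ii}$ and $\varrho_{d-i+1,d-i+1}$ to raise the ceiling, and whether that three-way tradeoff is favorable is exactly what has to be proved. A smaller issue: your domination $\omega_{ij}\le\omega_{i,d-i+1}$ is only a weak inequality (an equality under degeneracies), while the contradiction needs strictness.

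The paper closes this gap by a different mechanism. Lemma \ref{Lemma-Rho1d=Rhojd} uses positivity of the $3\times 3$ principal minors, $D_3(1,j,d)=-\varrho_{11}|\varrho_{1j}-\varrho_{jd}^{\ast}\e^{i\theta}|^{2}$, to force $\varrho_{jd}=\varrho_{1j}^{\ast}\e^{i\theta}$ once the $(1,d)$ block is singular; this ties each candidate entry $\varrho_{1j}$ to its partner $\varrho_{jd}$, so that the pair costs $4|\varrho_{1j}|^{2}$ of purity while contributing $2|\varrho_{1j}|^{2}\Omega_j^{2}$ to the speed, with $\Omega_j^{2}=\omega_{1j}^{2}+\omega_{jd}^{2}=\omega_{1d}^{2}-2\omega_{1j}\omega_{jd}<\omega_{1d}^{2}$. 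Lemma \ref{Lemma-Rho1j=zero} then runs the constrained stationarity conditions: the multiplier is pinned to $\mu=\omega_{1d}^{2}/2$ by the $\varrho_{11}$ equation, and $|\varrho_{1j}|(2\mu-\Omega_j^{2})=0$ forces $\varrho_{1j}=0$. The decisive inequality is thus $\Omega_j^{2}<\omega_{1d}^{2}$ --- a comparison of the \emph{pair} $(\varrho_{1j},\varrho_{jd})$ against the $\varrho_{1d}$ channel --- not your entrywise $\omega_{ij}\le\omega_{i,d-i+1}$; this is what survives the saturation of the positivity ceiling. To rescue your route you would have to carry out the KKT analysis with the dual PSD multiplier you mention, at which point you would essentially be rederiving these two lemmas.
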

\begin{proof}
The proof is provided in Appendix \ref{Appendix-Prop-Xstate}.
\end{proof}

The following theorems determine the nonzero entries of the optimal state \eqref{Optimal-X-state} for various values of purity.
\begin{theorem}\label{Theorem-Opt-k0}
For $\kappa\in\left[1/d,\kappa_0\right]$ where
\begin{equation}\label{kappa0}
\kappa_0=1/d+2/d^2,
\end{equation}
the optimal quantum state  is given by
\begin{eqnarray}\label{rhoOP-0}
\varrho_{\kappa\in[\frac{1}{d},\kappa_0]}=\frac{1}{d}\sum_{i=1}^{d} \ket{E_i}\bra{E_i}+\varrho_{1d}\ket{E_1}\bra{E_d}+\varrho_{1d}^\ast\ket{E_d}\bra{E_1},
\end{eqnarray}
where $\varrho _{1d}= \e^{i\theta_1}\sqrt{\frac{1}{2}\left( \kappa-1/d\right)}$.
Obviously,  the rank of the optimal  state \eqref{rhoOP-0} is full except at $\kappa_0$ for which the rank is diminished by 1, so that $\textrm{rank}\{\varrho_{\kappa_0}\}=d-1$. Moreover, the optimal squared speed is $v^2(\varrho_{\kappa})=(\kappa-1/d)\omega_{1d}^2$.
\end{theorem}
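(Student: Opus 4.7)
The plan is to combine Proposition \ref{Prop-Xstate} with a two-step upper bound on $v^2$, and then to check that both inequalities in the chain are saturated precisely on the proposed family \eqref{rhoOP-0}. By Proposition \ref{Prop-Xstate} the optimal state is a persymmetric $X$-state, so in the formula \eqref{SqSpeed-Unitary-1} the only off-diagonal entries that contribute are the secondary-diagonal ones $\varrho_{i,d-i+1}$. Writing $p_i=\varrho_{ii}$ (with $p_i=p_{d-i+1}$ by persymmetry) and $a_i=|\varrho_{i,d-i+1}|^2$, equations \eqref{SqSpeed-Unitary-1} and \eqref{Purity-1} reduce to
\begin{equation*}
\kappa = \sum_{i=1}^{d} p_i^2 + 2\sum_{i:\, i<d-i+1} a_i, \qquad v^2(\varrho) = 2\sum_{i:\, i<d-i+1} a_i\, \omega_{i,d-i+1}^2.
\end{equation*}

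Next I would bound $v^2$ in two stages. Since the eigenvalues are non-decreasingly ordered, every secondary-diagonal gap satisfies $\omega_{i,d-i+1}=E_{d-i+1}-E_i\le E_d-E_1=\omega_{1d}$, so $v^2\le 2\omega_{1d}^2\sum_i a_i = \omega_{1d}^2\bigl(\kappa-\sum_i p_i^2\bigr)$. Using $\sum_i p_i=1$ and Cauchy--Schwarz (equivalently the convexity of $x\mapsto x^2$), one gets $\sum_i p_i^2\ge 1/d$ with equality iff the diagonal is uniform. Hence $v^2\le(\kappa-1/d)\omega_{1d}^2$.

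Third, I verify that the ansatz \eqref{rhoOP-0} saturates both inequalities: $p_i=1/d$ for every $i$ kills the Cauchy--Schwarz slack, and turning off every secondary-diagonal entry except $\varrho_{1d}$ kills the frequency slack. The purity relation then forces $|\varrho_{1d}|^2=(\kappa-1/d)/2$, with an arbitrary phase $\theta_1$. Positivity of a persymmetric $X$-state decomposes block-by-block into $a_i\le p_i p_{d-i+1}$; only the $(1,d)$ block is active here and it demands $(\kappa-1/d)/2\le 1/d^2$, i.e.\ $\kappa\le 1/d+2/d^2=\kappa_0$. The rank claim follows by inspecting the same $2\times 2$ block: its determinant $1/d^2-(\kappa-1/d)/2$ is strictly positive for $\kappa<\kappa_0$ and vanishes at $\kappa_0$, while the remaining $d-2$ diagonal eigenvalues equal $1/d$, yielding full rank below $\kappa_0$ and rank $d-1$ at $\kappa_0$.

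The main subtlety is showing that the two equality conditions are jointly admissible under positivity. This is exactly what the threshold $\kappa_0$ captures: uniform diagonal together with a single nonzero $\varrho_{1d}$ remains a valid density matrix iff $\kappa\le \kappa_0$, and beyond this window one is forced either to break uniformity or to activate further $\varrho_{i,d-i+1}$, which is the content of Theorems \ref{Theorem-Opt-Main-1} and \ref{Theorem-Opt-Main-2}.
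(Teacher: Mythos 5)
Your proof is correct, but it follows a genuinely different route from the paper's. The paper argues constructively and somewhat heuristically: starting from $\Id/d$ it increases only $|\varrho_{1d}|$ on the grounds that this entry multiplies the largest squared gap $\omega_{1d}^2$ in Eq.~\eqref{SqSpeed-Unitary-1}, and reads off the purity $\kappa=1/d+2|\varrho_{1d}|^2$ and the constraint $|\varrho_{1d}|\le 1/d$; it does not explicitly rule out that some other combination of diagonal and off-diagonal changes could do better at the same purity. You instead derive the clean a priori bound $v^2\le \omega_{1d}^2\bigl(\kappa-\sum_i p_i^2\bigr)\le(\kappa-1/d)\,\omega_{1d}^2$ from $\omega_{i,d-i+1}\le\omega_{1d}$ together with $\sum_i p_i^2\ge 1/d$, and then exhibit the ansatz \eqref{rhoOP-0} as the saturating state, with positivity of the $(1,d)$ block delivering exactly the threshold $\kappa_0=1/d+2/d^2$. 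This is the more rigorous argument, and it buys something extra you may not have noticed: since the ordering $E_1\le\cdots\le E_d$ gives $\omega_{ij}\le\omega_{1d}$ for \emph{every} pair $i<j$, your chain of inequalities holds for arbitrary density matrices, not only persymmetric $X$-states, so the appeal to Proposition~\ref{Prop-Xstate} is not actually needed here --- the theorem in this purity window follows self-containedly from the upper bound plus the explicit saturating state. (The only caveat, which the paper shares, is that uniqueness of the optimizer can fail if some other gap degenerately equals $\omega_{1d}$; your argument still correctly identifies the optimal \emph{value} and one optimal state in that case.) The paper's greedy construction, for its part, is what generalizes naturally to the subsequent regimes $\kappa>\kappa_0$, where a single global bound of this type no longer saturates.
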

\begin{proof}
Our goal is  to increase the speed by increasing  the purity in an optimum manner. This is achieved if we let one or more  off-diagonal elements of the density matrix to be increased. The diagonal elements of the density matrix do not play a role in the speed and among the set of all off-diagonal elements, $\varrho_{1d}$ plays a more important role than the others, as its absolute value  appears as the coefficient of the largest squared energy gap $\omega_{1d}^2$ in Eq. \eqref{SqSpeed-Unitary-1}.
Starting from the maximally mixed state $\varrho_0=\frac{1}{d}\sum_{i=1}^{d} \ket{E_i}\bra{E_i}$ which has minimum purity $\kappa=1/d$ and vanishing speed,  and by keeping all the parameters fixed  except $|\varrho_{1d}|$,  we increase the speed by increasing $|\varrho_{1d}|$ under the restriction $|\varrho_{1d}|\le \sqrt{\varrho_{11}\varrho_{dd}}=1/d$.  In turn, the purity is increased to $\kappa=1/d+2|\varrho_{1d}|^2$, or equivalently, the density matrix acquires the new off-diagonal entries  $\varrho _{1d}=\varrho^\ast _{d1}= \e^{i\theta_1}\sqrt{\frac{1}{2}\left(\kappa-1/d\right)}$ for $\kappa\in\left[\frac{1}{d},\frac{1}{d}\left(1+\frac{2}{d}\right)\right]$.
 We arrive therefore at the result asserted by Theorem \ref{Theorem-Opt-k0}  for  the optimal state when  $\kappa\in\left[1/d,\kappa_0\right]$.
\end{proof}
It is interesting to note here that the  state given above is the same as the optimal state  introduced recently in Ref. \cite{ShaoPRR2020} to fulfill the target via the operational definition of QSL.
Our next goal   is to find  the optimum squared speed for the  other values of    purity, namely for $\kappa\in\left[\kappa_0,1\right]$.
For $\kappa\in [\kappa_0,1]$ and dimensions $d\ge 4$, the optimal  state depends in general   on the relative values of the Bohr frequencies  $\omega_{ij}$. Let us define     $\gamma_1=\omega_{1d}^2/\omega_{2,d-1}^2$, $\gamma_2=\omega_{1d}^2/\omega_{3,d-2}^2$,  and so on. Obviously,  $\gamma_i < \gamma_{i+1}$.
Then,  Theorems \ref{Theorem-Opt-Main-1} and \ref{Theorem-Opt-Main-2}  present the optimal states when  $\gamma_1\ge 2$ and $\gamma_1<2$, respectively.
\begin{theorem}\label{Theorem-Opt-Main-1}
When $\gamma_1\ge 2$,  the optimal state is given by
\begin{eqnarray} \nonumber
\varrho_{\kappa\in [\kappa_0,1]}&=& 2\left(\frac{1}{d}+x\right)\ket{\Psi_1}\bra{\Psi_1} \qquad (\textrm{for } \gamma_1\ge 2) \\ \label{rhoOP-1-2}
&+&\left(\frac{1}{d}-\frac{2x}{d-2}\right)\sum_{i=2}^{d-1}\ket{E_i}\bra{E_i},
\end{eqnarray}
for $\kappa\in [\kappa_0,1]$, where  $\ket{\Psi_1}=\frac{1}{\sqrt{2}}(\ket{E_1}+\e^{-i\theta_1}\ket{E_d})$ and
\begin{equation}\label{x-1}
x=\frac{d-2}{2d(d-1)}\left[-1+d\sqrt{\frac{-1+(d-1)\kappa}{d-2}}\right].
\end{equation}
\end{theorem}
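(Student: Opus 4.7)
The plan is to convert the maximization into a finite-dimensional optimization using Proposition \ref{Prop-Xstate}, and then to argue that only the most energetic pair $(1,d)$ is ``coherently active'' at the optimum when $\gamma_1 \ge 2$.

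First, I parametrize the persymmetric $X$-state by the real diagonal entries $a_i = \varrho_{ii} = \varrho_{d-i+1,d-i+1}$ for $i = 1, \dots, \lfloor d/2 \rfloor$, a middle entry $a_{\mathrm{mid}}$ when $d$ is odd, and the nonnegative moduli $\beta_i = |\varrho_{i,d-i+1}|$ of the secondary-diagonal entries. The $X$-structure reduces positivity to $2\times 2$ block conditions, namely $0 \le \beta_i \le a_i$; the trace gives $2\sum_i a_i + a_{\mathrm{mid}} = 1$; the purity becomes $\kappa = 2\sum_i a_i^2 + a_{\mathrm{mid}}^2 + 2\sum_i \beta_i^2$; and Eq.~\eqref{SqSpeed-Unitary-1} collapses to $v^2 = 2 \sum_i \beta_i^2 \omega_i^2$ with gaps $\omega_i := E_{d-i+1} - E_i$ strictly decreasing in $i$.

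For the ansatz, I first saturate $\beta_1 = a_1$ (greedy, since $\omega_1^2$ is the largest coefficient and each $\beta_i^2 \le a_i^2$), then set $\beta_i = 0$ for $i \ge 2$ and take the $d-2$ remaining diagonal entries equal to $(1-2a_1)/(d-2)$. Equidistribution is optimal by Cauchy--Schwarz: it minimizes the purity consumed by the middle block and so maximizes the purity available for $\beta_1^2$, hence for $a_1$ and for $v^2 = 2a_1^2 \omega_{1d}^2$. Substituting into the purity constraint yields the quadratic $4a_1^2 + (1-2a_1)^2/(d-2) = \kappa$, whose positive root is $a_1 = [1+\sqrt{(d-2)[(d-1)\kappa-1]}]/[2(d-1)]$; a short algebraic check identifies this with $1/d+x$ for $x$ as in Eq.~\eqref{x-1}, producing exactly the state in Eq.~\eqref{rhoOP-1-2}.

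The main obstacle is to rule out every competitor in which some $\beta_j$ with $j \ge 2$ is also nonzero. I would first allow $\beta_2 \in [0, a_2]$ to be free (still keeping $\beta_1 = a_1$) and eliminate it via the purity constraint; the resulting $\beta_2^2$ is concave in $a_2$ with its stationary maximum at the uniform-middle value $a_2 = (1-2a_1)/(d-2)$, at which $\beta_2$ vanishes. Feeding this back makes $v^2$ a quadratic in $a_1$ whose interior stationary point $a_1^{\ast} = \omega_2^2/[\,2\omega_2^2(d-1) - (d-2)\omega_1^2\,]$ satisfies $a_1^{\ast} \ge 1/2$ precisely when $\gamma_1 \ge 2$. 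Since the trace forces $a_1 \le 1/2$, the quadratic is monotone on the feasibility interval $[a_-,a_+]$ and attains its maximum at $a_+$, which matches the single-pair root above (both solve the same purity quadratic). The same mechanism, applied to any further $\beta_j$ with $j \ge 3$ and using $\gamma_{j-1} \ge \gamma_1 \ge 2$, forces every would-be additional active pair to drop out at the optimum. The delicate point is establishing this endpoint comparison uniformly across $\kappa \in [\kappa_0, 1]$; the combination of the trace bound $a_1 \le 1/2$ with the explicit feasibility curve is precisely what makes the threshold $\gamma_1 = 2$ sharp.
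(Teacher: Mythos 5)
Your argument is correct and lands on the same state and the same threshold as the paper, but by a genuinely different route. The paper's proof (Appendix~\ref{Appendix-Theorem-Opt-Main-1}) first argues, somewhat heuristically, that for $\kappa\ge\kappa_0$ the candidates have $\varrho_{11}=\varrho_{dd}=|\varrho_{1d}|=1/d+x$ with an equidistributed middle diagonal plus free inner off-diagonal entries, and then imposes first-order Lagrange conditions: stationarity gives $|\varrho_{ij}|\,(\mu-\omega_{ij}^2)=0$ together with a relation fixing $x$ in terms of the multiplier $\mu$, and the constraint $0\le x\le (d-2)/(2d)$ (i.e.\ $\varrho_{11}\le 1/2$) forces $\omega_{1d}^2/2\le\mu\le\omega_{1d}^2$, so that for $\gamma_1\ge 2$ no inner Bohr frequency can play the role of $\mu$ and every inner off-diagonal entry must vanish. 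You instead work entirely inside the persymmetric $X$-class of Proposition~\ref{Prop-Xstate}, eliminate the constraints explicitly, and decide the competition between $\beta_1$ and $\beta_2$ by locating the vertex of an explicit quadratic in $a_1$ relative to the trace bound $a_1\le 1/2$; the two mechanisms are dual, since your condition $a_1^{\ast}\ge 1/2\iff\gamma_1\ge 2$ is precisely the paper's statement that $\mu=\omega_{2,d-1}^2$ would violate $\mu\ge\omega_{1d}^2/2$. What your route buys is an elementary, multiplier-free argument that also settles the endpoint/second-order issues left implicit by the paper's stationarity conditions (the optimum genuinely sits at the endpoint $a_+$ of the feasibility interval, not at an interior critical point), and a cleaner justification of the saturation $\beta_1=a_1$ and of the equidistributed middle block via the greedy allocation and Cauchy--Schwarz steps. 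Two small imprecisions are worth repairing: the claim that $\beta_2$ ``vanishes'' at the uniform-middle value of $a_2$ holds only once $a_1$ is pushed to $a_+$, since for interior $a_1$ the maximal value $\beta_2^2=\bigl[\kappa-4a_1^2-(1-2a_1)^2/(d-2)\bigr]/2$ is strictly positive; and when $\gamma_1>2(d-1)/(d-2)$ the quadratic in $a_1$ opens upward with its vertex at negative $a_1^{\ast}$, so the characterization ``$a_1^{\ast}\ge 1/2$'' does not literally apply in that regime, although your stated conclusion (monotone increase on $[a_-,a_+]$, maximum at $a_+$, hence $\beta_2=0$) still holds there.
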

\begin{proof}
For a proof see Appendix \ref{Appendix-Theorem-Opt-Main-1}.
\end{proof}

For $\gamma_1<2$,  the calculations to specify the nonzero entries  become more complicated, however, a framework to identify  the optimal state can be developed. In this case, we need to define two purities $\kappa_1$ and $\kappa_2$ as
\begin{eqnarray}\label{kappa1}
\kappa_1&=&\frac{4+(d-2)(2-\gamma_1)^2}{[2(d-1)-(d-2)\gamma_1]^2}, \\ \label{kappa2}
\kappa_2&=&\frac{4+d(2-\gamma_1)^2}{[2(d-1)-(d-2)\gamma_1]^2}.
\end{eqnarray}

\begin{theorem}\label{Theorem-Opt-Main-2}
When $\gamma_1<2$ and depending on the value of $\kappa$,  the optimal state can be obtained as follows.
\begin{itemize}
\item For   $\kappa\in[\kappa_0,\kappa_1]$, the optimal state is still given by Eq. \eqref{rhoOP-1-2}.
\item For $\kappa\in [\kappa_1,\kappa_2]$,  the optimal state is given by
\begin{eqnarray}\nonumber
\varrho_{\kappa\in[\kappa_1,\kappa_2]} &=& 2\left(\frac{1}{d}+x_0\right)\ket{\Psi_1}\bra{\Psi_1}\quad (\textrm{for } \gamma_1< 2) \\  \label{rhoOP-2-1}
&+&\left(\frac{1}{d}-\frac{2x_0}{d-2}\right)\sum_{i=2}^{d-1}\ket{E_i}\bra{E_i} \\ \nonumber
&+&\varrho_{2,d-1}\ket{E_2}\bra{E_{d-1}}+\varrho_{2,d-1}^\ast\ket{E_{d-1}}\bra{E_{2}},
\end{eqnarray}
where $\varrho_{2,d-1}=\e^{i\theta_2}\sqrt{(\kappa-\kappa_1)/2}$ with  $\theta_2$ as an arbitrary phase, and
\begin{equation}\label{x0}
x_0=\frac{1}{d}\left[\frac{\gamma_{1}-1}{2(d-1)/(d-2)-\gamma_{1}}\right].
\end{equation}
\item
To proceed further for $\kappa\ge \kappa_2$ and to reach the maximum purity of unity, we must  repeat steps  similar to those  used to derive Eq. \eqref{rhoOP-2-1}. For more detail, we refer to  step 4 of the proof. The calculations become more complicated due to the need to account for   other threshold ratios  such as $\gamma_2$,  $\gamma_1/\gamma_2$, and so on. However, for dimensions $d\le 4$ the only relevant threshold parameter is $\gamma_1$, as such  the steps given above are exhaustive to obtain  the optimal state. Figure \ref{SchematicOverview} represents a schematic overview of how this framework works.
\end{itemize}
\end{theorem}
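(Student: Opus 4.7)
The plan is to build on the construction used in Theorem~\ref{Theorem-Opt-Main-1} and extend it by activating additional secondary-diagonal entries when the purity crosses the thresholds $\kappa_1$ and $\kappa_2$. By Proposition~\ref{Prop-Xstate}, the optimizer is persymmetric $X$-shaped, so only the entries $\varrho_{i,d-i+1}$ for $i=1,\ldots,\lfloor d/2\rfloor$ contribute to $v^{2}$, and the positivity constraint decouples into $2\times 2$ blocks enforcing $|\varrho_{i,d-i+1}|^{2}\le \varrho_{ii}\varrho_{d-i+1,d-i+1}$. With persymmetric diagonals $\varrho_{ii}=\varrho_{d-i+1,d-i+1}$, one has $v^{2}=2\sum_{i}|\varrho_{i,d-i+1}|^{2}\omega_{i,d-i+1}^{2}$ and $\kappa=\sum_{i}\varrho_{ii}^{2}+2\sum_{i}|\varrho_{i,d-i+1}|^{2}$, so the problem reduces to an optimization over the outer diagonal-redistribution parameters together with the secondary-diagonal magnitudes.

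For $\kappa\in[\kappa_{0},\kappa_{1}]$ I would argue that only $\varrho_{1d}$ is active and the formula \eqref{rhoOP-1-2} holds verbatim. The argument mirrors that of Theorem~\ref{Theorem-Opt-Main-1}: shift mass $x$ uniformly from each of the $d-2$ middle diagonals (by $2x/(d-2)$) into the outer pair $\varrho_{11}=\varrho_{dd}$, saturate $|\varrho_{1d}|^{2}=(1/d+x)^{2}$, and express $\kappa$ as a quadratic in $x$ to recover \eqref{x-1}. This ansatz remains optimal provided the middle diagonals stay above the value at which activating $\varrho_{2,d-1}$ would be preferable. The crossover is located by an infinitesimal variation: at fixed $\kappa$, comparing the marginal gain $dv^{2}/d\kappa$ from further growing $x$ against that from injecting purity into $|\varrho_{2,d-1}|$, one finds the two rates balance precisely when $x=x_{0}$ given in \eqref{x0}; substituting back into the purity formula yields $\kappa_{1}$ as in \eqref{kappa1}.

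For $\kappa\in[\kappa_{1},\kappa_{2}]$ I would freeze the diagonal rearrangement at $x=x_{0}$ (so $\varrho_{11}=\varrho_{dd}=1/d+x_{0}$ and $|\varrho_{1d}|$ stays saturated against its block-positivity bound) and channel all further purity growth into $|\varrho_{2,d-1}|$. Because a single squared off-diagonal entry raises $\kappa$ by twice its increment, one immediately obtains $|\varrho_{2,d-1}|^{2}=(\kappa-\kappa_{1})/2$ as claimed. The ceiling $\kappa_{2}$ in \eqref{kappa2} is then identified as the purity at which the middle block saturates its positivity bound $|\varrho_{2,d-1}|^{2}\le \varrho_{22}\varrho_{d-1,d-1}$. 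The justification that the diagonals must freeze, rather than continue to redistribute, is again a KKT/marginal statement: under the hypothesis $\gamma_{1}<2$, the projected gradient of $v^{2}$ at fixed $\kappa$ inside this window lies purely along the $|\varrho_{2,d-1}|$ direction. For $\kappa\ge \kappa_{2}$ I would sketch the recursive continuation—re-opening diagonal redistribution, then eventually activating $\varrho_{3,d-2}$ at a new threshold controlled by $\gamma_{2}$, and so on—while noting that for $d\le 4$ no further shell exists, so the cases above exhaust the theorem.

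The main obstacle I anticipate is the rigorous KKT analysis underpinning the second regime, namely proving that once $\kappa$ exceeds $\kappa_{1}$ the diagonal parameter must remain pinned at $x_{0}$ until the second block saturates. Concretely, one must verify that at the boundary the Lagrange multipliers for the trace and purity constraints align so that every feasible gradient direction of $v^{2}$ other than the one along $|\varrho_{2,d-1}|$ strictly decreases the objective, and that the sign of this criterion flips precisely at $\gamma_{1}=2$. Matching the resulting algebraic conditions against the closed-form expressions $x_{0}$, $\kappa_{1}$ and $\kappa_{2}$—and checking continuity of the optimal state and of $v^{2}(\varrho_{\kappa})$ across both thresholds—is the technical heart of the argument.
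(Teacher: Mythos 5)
Your proposal is correct and follows essentially the same route as the paper: the paper's Appendix~\ref{Appendix-Theorem-Opt-Main-2} is exactly a Lagrange-multiplier computation in which the complementary-slackness condition $|\varrho_{2,d-1}|(\mu-\omega_{2,d-1}^2)=0$ forces $\mu=\omega_{2,d-1}^2$ once that entry activates (admissible only when $\gamma_1<2$, by the bound $\omega_{1d}^2/2\le\mu\le\omega_{1d}^2$), and the stationarity condition in $x$ then pins $x=x_0$ — which is precisely your ``marginal rates balance'' criterion and resolves the KKT obstacle you flag. The identification of $\kappa_1$ from the purity formula at $x=x_0$, of $\kappa_2$ from saturation of the $M(2,d-1)$ positivity block, and the recursive continuation for $\kappa\ge\kappa_2$ all match the paper's Steps 3 and 4.
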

\begin{proof}
A sketch of the steps to be followed is provided in  Appendix  \ref{Appendix-Theorem-Opt-Main-2}.
\end{proof}
\begin{figure}[t]
\includegraphics[scale=0.4]{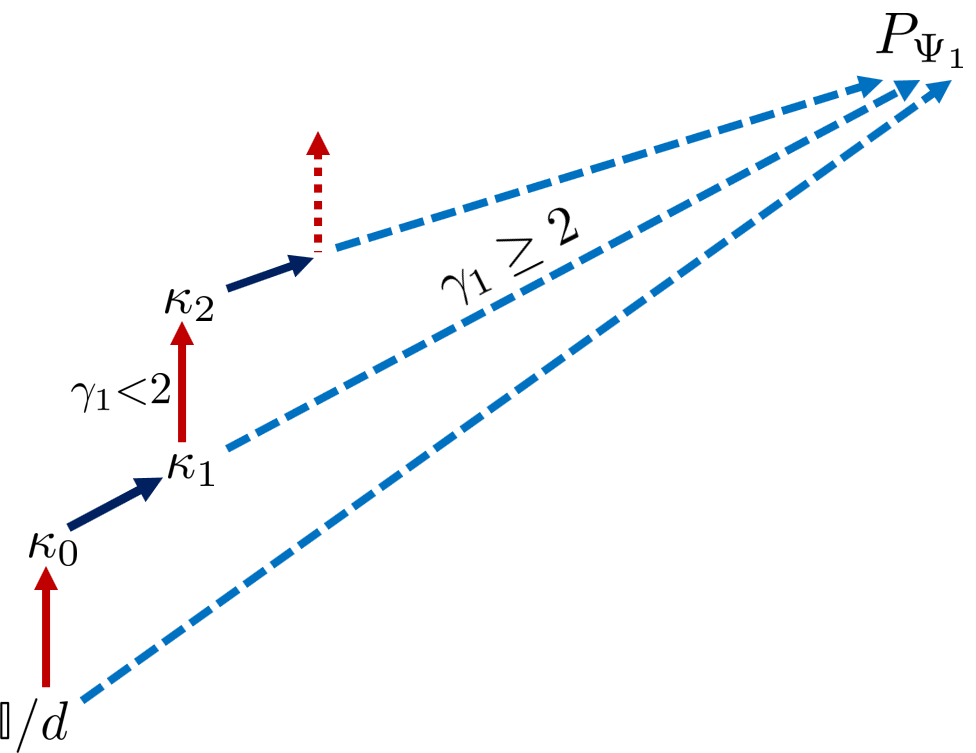}
\caption{A schematic overview  to obtain the optimal state for a given purity $\kappa$, depending on the threshold ratio $\gamma_1$. Starting from the maximally mixed state $\Id/d$, each upward  arrow (red-solid arrow) reduces the rank by $1$, such that $\textrm{rank}\{\rho_{\kappa_0}\}=d-1$, $\textrm{rank}\{\rho_{\kappa_2}\}=d-2$. In each upright arrow (blue-dashed arrow), on the other hand, we do not have any reduction of the rank except at the destination point $P_{\Psi_1}$, at which we have $\textrm{rank}\{P_{\Psi_1}\}=1$  }
\label{SchematicOverview}
\end{figure}


\subsection{Examples: optimal mixed quantum states for   $d\le 4$}
After giving a framework to obtain the optimal states, in this subsection we   present  the optimal states for the first three  simple but important cases $d=2,3$ and $4$. We postpone their proof until we  provide a proof sketch for our main theorems for a general $d$ in Appendix  \ref{Appendix-Theorem-Opt-Main-1,2}.

\subsubsection{Optimal mixed quantum states for   $d=2$}
In this case, as demonstrated in Fig. \ref{SchematicOverview}, the optimal state is achieved  by the first upright arrow, i.e., the path $\Id/d \rightarrow P_{\Psi_1}$.
Actually, since  $\kappa_0=1$ for $d=2$,   Theorem \ref{Theorem-Opt-k0} presents the optimal  squared speed of a qubit    for the whole range  $\kappa\in[1/2,1]$. In this case, Eqs  \eqref{SqSpeed-Unitary-1} and  \eqref{rhoOP-0} give the  optimal speed and its  corresponding optimal state  as  $v^2(\varrho_{\kappa})=(\kappa-1/2)\omega_{12}^2$ and
\begin{equation}\label{Optimum-d=2}
\varrho_{\kappa}=\begin{pmatrix}
\frac{1}{2}& \e^{i\theta_1}\sqrt{\frac{1}{2}\left( \kappa-1/2\right) }\\
\e^{-i\theta_1}\sqrt{\frac{1}{2}\left( \kappa-1/2\right) } & \frac12 \\
\end{pmatrix},
\end{equation}
respectively.
\subsubsection{Optimal mixed quantum states for   $d=3$}
For $d=3$,  we have to take the path $\Id/d\rightarrow \kappa_0\rightarrow P_{\Psi_1}$.
In this case, the   state with optimal speed is given by
\begin{equation}\label{Optimum-d=3}
\varrho_{\kappa}=\begin{pmatrix}
\varrho_{11} & 0 & |\varrho_{13}|\e^{i\theta_1}\\
0 & \varrho_{22} & 0\\
|\varrho_{13}|\e^{-i\theta_1}  & 0 & \varrho_{11}
\end{pmatrix},
\end{equation}
where $\theta_1$ is an arbitrary phase, and matrix entries  $\varrho_{11}$, $|\varrho_{13}|$ and $\varrho_{22}$  are given in Table \ref{Table-d=3}.
\begin{table}[h!]
  \begin{center}
    \caption{Parameters of the optimum state \eqref{Optimum-d=3} for $d=3$.}
    \label{Table-d=3}
    \begin{tabular}{|l|c|c|c|} \hline
       \diagbox{Purity}{State}& $\varrho_{11}$ & $|\varrho_{13}|$ & $\varrho_{22}$ \\       \hline
      $\frac{1}{3}\le \kappa \le \frac{5}{9}$ & $\frac{1}{3}$ & $\sqrt{\frac12\left( \kappa-\frac{1}{3}\right) }$ & $\frac{1}{3}$ \\
      $\frac{5}{9}\le \kappa \le 1$ & $\frac{1+\sqrt{2\kappa-1}}{4}$  & $\frac{1+\sqrt{2\kappa-1}}{4}$ & $\frac{1-\sqrt{2\kappa-1}}{2}$ \\ \hline
    \end{tabular}
  \end{center}
\end{table}
\subsubsection{Optimal mixed quantum states for   $d=4$}
For $d\ge 4$, the optimal  state depends not only on the purity $\kappa$, but also on the relative values of the Bohr frequencies $\omega_{ij}$. In particular for $d=4$, depending on whether $\gamma_1$ is greater than  or less than 2,  we have to take the path $\Id/d \rightarrow \kappa_0\rightarrow \kappa_1\rightarrow P_{\Psi_1}$ or $\Id/d \rightarrow \kappa_0\rightarrow \kappa_1\rightarrow \kappa_2\rightarrow  P_{\Psi_1}$.
 In this case, the optimal state is given by
\begin{equation}\label{Optimum-d=4}
\varrho_{\kappa}=\begin{pmatrix}
\varrho_{11} & 0  & 0  &  |\varrho_{14}|\e^{i\theta_1} \\
 0 & \varrho_{22} & |\varrho_{23}| \e^{i\theta_2} & 0 \\
0 & |\varrho_{23}| \e^{-i\theta_2} & \varrho_{22} & 0 \\
|\varrho_{14}|\e^{-i\theta_1} & 0 & 0 & \varrho_{11} \\
\end{pmatrix},
\end{equation}
where its nonzero entries,  depending on whether  the threshold parameter $\gamma_{1}=\omega^2_{14}/\omega^2_{23}$ is greater than or less than 2,  are  given in Table \ref{Table-d=4}.
\begin{table}[h!]
  \begin{center}
    \caption{Parameters of the optimal  state \eqref{Optimum-d=4} for $d=4$,  where  $\gamma_{1}=\omega_{14}^2/\omega_{23}^2$,  $\kappa_1 = 4(1/4+x_0)^2+2(1/4-x_0)^2$, $\kappa_2=\kappa_1+2(1/4-x_0)^2$, and $x_{0}=(\gamma_{1}-1)/[4(3-\gamma_{1})]$.}
    \label{Table-d=4}
    \begin{tabular}{|c|c|c|c|c|c|}\hline
   \diagbox{Purity}{State} & $\gamma_{1}$ & $\varrho_{11}$ & $|\varrho_{14}|$ & $\varrho_{22}$ & $|\varrho_{23}|$  \\       \hline
      $\frac{1}{4}\le \kappa \le\frac{3}{8}$ & $-$ &  $\frac{1}{4}$ & $\frac{\sqrt{8\kappa-2}}{4}$ & $\frac{1}{4}$ & $0$  \\
      $\frac{3}{8}\le \kappa \le 1 $ & $\ge2$ & $\frac{1+\sqrt{6\kappa-2}}{6}$  & $\frac{1+\sqrt{6\kappa-2}}{6}$ & $\frac{2-\sqrt{6\kappa-2}}{6}$ & $0$  \\
      $\frac{3}{8}\le \kappa \le\kappa_1 $ & $<2$ & $\frac{1+\sqrt{6\kappa-2}}{6}$  & $\frac{1+\sqrt{6\kappa-2}}{6}$ & $\frac{2-\sqrt{6\kappa-2}}{6}$ & $0$   \\
      $\kappa_1 \le \kappa \le \kappa_2 $ & $<2$ & $\frac{1}{4}+x_0$  & $\frac{1}{4}+x_0$ & $\frac{1}{4}-x_0$ & $\sqrt{\frac{1}{2}( \kappa-\kappa_1)}$  \\
      $\kappa_2 \le \kappa \le  1 $ & $<2$ &  $\frac{1+\sqrt{2\kappa-1}}{4}$  & $\frac{1+\sqrt{2\kappa-1}}{4}$ & $\frac{1-\sqrt{2\kappa-1}}{4}$ & $\frac{1-\sqrt{2\kappa-1}}{4}$ \\ \hline
    \end{tabular}
  \end{center}
\end{table}

\section{Simulation and comparison with WY   skew information}\label{Sec-Simulation}
In this section we examine  a numerical simulation performed on the squared speed \eqref{SqSpeed-Unitary-1}  for $d=4$. Our results show, as expected, that  the   optimal state \eqref{Optimum-d=4} is always an upper bound for this speed.
To generate a uniform distribution on the set of all $d$-dimensional quantum states $\varrho$ we follow the method presented in \cite{KarolPRA1998}.
A general $d$-dimensional quantum  state $\varrho$ can be represented by a diagonal density matrix $\varrho_{\diag}=\diag\{p_1,\cdots,p_d\}$ and a unitary matrix $U\in SU(d)$ as $\varrho=U\varrho_{\diag} U^\dag$.
Accordingly, one can  proceed to construct the uniform density matrices  through two steps: First, a uniform diagonal density matrix $\varrho_{\diag}=\diag\{p_1,\cdots,p_d\}$ is generated by using a uniform distribution of  probabilities. In the second route, we use a uniform distribution on  unitary transformations  $U\in SU(d)$, and  generate a uniform  density matrix $\varrho=U\varrho_{\diag} U^\dag$. A simulation of the  squared speed for $d=4$ is given in Fig. \ref{Fig-Simulation}. To construct this figure, we generate one million random quantum states; ten thousand for random diagonal states and one hundred for random unitary matrices. Moreover,  by choosing $\omega_{14}=\sqrt{2}$, we have normalized the optimal speed to unity in the sense that $v^2(\Psi_1)=1$. Our results confirm that the squared speed is bounded from above by the analytical optimal squared speed given by Eq. \eqref{Optimum-d=4} (the red curve).

Now, two comments are in order. (i) The figure shows a sparse distribution of the randomly generated states for large values of  purity. This  comes from the fact that the set of pure states is measure zero, as such, the probability of obtaining random diagonal pure states is zero. In light of this, the probability of generating randomly diagonal states with large purity is very small so that the density of  generated states is reduced  by increasing the purity $\kappa$.  (ii) It follows from the  figure   that   the optimal state (the red curve)  is not reachable randomly, unless  for low purities. This is evident from our result since  the optimal state \eqref{Optimal-X-state}  is a $X$-type state which is of  measure zero among all  quantum states.  This implies  that  for the more complicated case of $\gamma_1<2$ of   high-dimensional systems  where the numerical simulation is inevitable, the probability of finding the optimal state is almost  zero  unless the search is performed    among the  states which  possess  the  $X$-type form \eqref{Optimal-X-state}.

We have to stress here that the quantum speed, as well as the  states with  maximum speed,  depends in general  on the chosen  metric.   Accordingly,  for the metrics other than the Euclidean  metric that we have used in this work,   state \eqref{Optimal-X-state} is not guaranteed to be optimal  for the  whole values of purity. In such cases, however, Eq.  \eqref{Optimal-X-state} can be exploited in order to derive tight lower bounds for the maximum speed. To see this, let us examine  the squared speed based on the Wigner-Yanase (WY) skew information \cite{MondalPLA2016, PiresPRX2016}, which takes the following form in our notation
\begin{eqnarray} \label{SqSpeed-Unitary-2}
v_{\textrm{WY}}^2(\varrho)&=&-\Tr\left[ H,\sqrt{\varrho}\right] ^{2}=  \sum_{i,j} |(\sqrt{\varrho})_{ij}|^2 \omega_{ij}^2.
\end{eqnarray}
This, clearly, is similar in form to the Euclidean-based squared speed \eqref{SqSpeed-Unitary-1}, except that here the off-diagonal  entries of $\sqrt{\varrho}$ are  replaced by the ones of $\varrho$.  Figure \ref{Fig-Simulation-WY}  shows a numerical simulation of this squared speed,  plotted against the purity.  The red curve, on the other hand, shows the squared speed \eqref{SqSpeed-Unitary-2}, obtained for   the state $\varrho_\kappa$ of Eq.  \eqref{Optimal-X-state}.  To obtain the red curve, we have used the  fact that  $\sqrt{\varrho_{\kappa}}$ is again a persymmetric  $X$-state as Eq. \eqref{Optimal-X-state}, however, with the entries $(\sqrt{\varrho_{\kappa}})_{11}=(\sqrt{\lambda_{+}}+\sqrt{\lambda_{-}})/2$ and  $|(\sqrt{\varrho_{\kappa}})_{14}|=(\sqrt{\lambda_{+}}-\sqrt{\lambda_{-}})/2$ where $\lambda_{\pm}=(\varrho_\kappa)_{11}\pm|(\varrho_\kappa)_{14}|$. Same relations hold for entries $(\sqrt{\varrho_{\kappa}})_{22}$ and $|(\sqrt{\varrho_{\kappa}})_{23}|$  in terms of $(\varrho_\kappa)_{22}$ and $|(\varrho_\kappa)_{23}|$.  Clearly, for some regions of low purity, the speed exceeds the speed limit proposed by our  optimal state \eqref{Optimal-X-state}.  This, in turn, implies that the  state \eqref{Optimal-X-state}, which is optimal  by means of the speed defined by Eq. \eqref{SqSpeed-Unitary-1}, no longer remains  optimal in general  for other definitions of speed.  Such optimal state, however, could shed some light on the  QSL bounds.
\begin{figure}[t]
\includegraphics[scale=0.68]{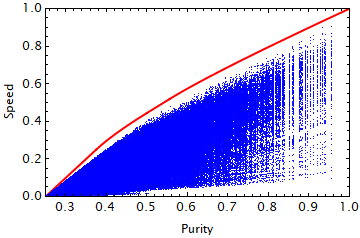}
\caption{A numerical simulation of the Euclidean-based squared speed \eqref{SqSpeed-Unitary-1}  for $d=4$ when  $\gamma_1=3/2$ ($\omega_{14}=\sqrt{2}$, $\omega_{23}=2/\sqrt{3}$). The extent of quantum states in the speed-purity plane are shown by one million randomly generated  states (blue points).   The red curve denotes the squared  speed corresponding to the optimal state  \eqref{Optimum-d=4}. For a given purity, no randomly generated state can have speed higher than the speed of our  optimal  state \eqref{Optimum-d=4}.    }
\label{Fig-Simulation}
\end{figure}
\begin{figure}[t]
\includegraphics[scale=0.68]{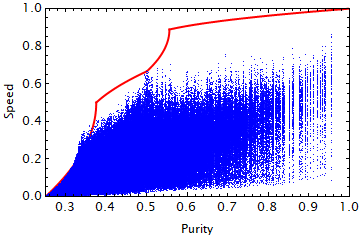}
\caption{A numerical simulation of the WY-based squared speed \eqref{SqSpeed-Unitary-2}  for $d=4$. The parameters are the same as in Fig \ref{Fig-Simulation}.  The red curve denotes the WY-squared  speed corresponding to  the proposed optimal  state  \eqref{Optimum-d=4}. The  state \eqref{Optimum-d=4}, which is optimal for the speed defined by Eq. \eqref{SqSpeed-Unitary-1}, no longer remains  optimal for the speed defined by Eq. \eqref{SqSpeed-Unitary-2}, especially when purity is sufficiently low. }
\label{Fig-Simulation-WY}
\end{figure}

\section{Discussion: Entanglement and coherence}\label{Sec-Discussion}
It is understood that quantum resources such as entanglement and coherence could  speed up quantum evolutions and in particular for unitary evolutions  it is shown that entanglement can reduce the quantum speed limit time \cite{GiovannettiPRA2003,GiovannettiEU2003,BorrasPRA2006}.  In this section we discuss  the role played by these resources.

\subsection{Entanglement}\label{SubSec-Entanglement}
To study the effect of  entanglement on the optimal speed,  suppose the $d$-dimensional system $\mathcal{H}$ constitutes a bipartite system, i.e. $\mathcal{H}=\mathcal{H}_A\otimes \mathcal{H}_B$ so that $d=d_1d_2$ where $d_1$ and $d_2$ are dimensions  of $\mathcal{H}_A$ and $\mathcal{H}_B$, respectively. We assume that the Hamiltonian $H$ is  diagonal in the product basis so  the energy eigenstates provide a product basis, i.e.,
\begin{equation}
\{\ket{E_1}=\ket{00},\ket{E_2}=\ket{01}, \cdots, \ket{E_d}=\ket{d_1-1,d_2-1}\}.
\end{equation} This includes the Hamiltonians of the form \cite{BatlePRA2005}
\begin{equation}
H=H_A\otimes \Id_B+\Id_A\otimes H_B,
\end{equation}
where $H_A$ and $H_B$ are local Hamiltonians of the subsystems  $\mathcal{H}_A$ and $\mathcal{H}_B$, respectively.  To proceed further we first consider   $d=4$,  i.e., a two-qubit system which is  the simplest bipartite system described by a four-level model.

In  this case  the Hamiltonian $H$ is diagonal in the product basis $\ket{E_1}=\ket{00}$, $\ket{E_2}=\ket{01}$, $\ket{E_3}=\ket{10}$, and $\ket{E_4}=\ket{11}$.
To quantify entanglement we use concurrence \cite{WoottersPRL1998} as a measure of entanglement which has a closed formula for two-qubit states.
With the above assumption,  the concurrence  of the optimal state \eqref{Optimum-d=4} is given by
\begin{equation}
\mathcal{C}(\varrho)=\max\{2(|\varrho_{14}|-\varrho_{22}),0\},
\end{equation}
which interestingly is exactly equal to the negativity measure of entanglement defined by $\mathcal{N}(\varrho)=|\varrho^{\textrm{T}_1}|_{1}-1$, where $\varrho^{\textrm{T}_1}$ is a matrix defined by  partial transposing \cite{PeresPRL1996,HorodeckiPLA1996} the quantum state $\varrho$ with respect to the first part, and $\|A\|_{\textrm{tr}}=\Tr\sqrt{A^\dagger A}$ is the trace norm.
Figures \ref{Fig-Concurrence-3}  and \ref{Fig-Concurrence-4} illustrate the concurrence (below-dashed curve) of the optimal state $\varrho_{\kappa}$ in terms of the purity $\kappa$ for two regimes $\gamma_1\ge  2$ and $\gamma_1 < 2$, respectively.
It follows from these  figures that in both cases the optimal state is disentangled for  $\kappa\in[1/4,3/8]$.
{On the other hand, for $\kappa> 3/8$ and in  both regimes,  the optimal state is no longer disentangled; its concurrence monotonically increases  with purity.

For an arbitrary $d$-dimensional composite system, with $d=d_1d_2$,  we first show  that  the optimal state $\varrho_{\kappa}$ is separable  for all purities $\kappa\in\left[1/d,\kappa_0\right]$.
To see this  one can easily check  that the state given by Eq. \eqref{rhoOP-0} possesses  the following representation  in terms of  product states
\begin{eqnarray}\label{rhoOP-0-SEP}
\varrho_{\kappa\in[\frac{1}{d},\kappa_0]}&=&\sum_{i=0}^{d_1-1} \sum_{j=0}^{d_2-1} \left(1/d-\Delta_{ij}\right)\ket{ij}\bra{ij} \\ \nonumber
&+& |\varrho_{1d}|\left[P_{x}^{+}\otimes Q_{x}^{+}+P_{x}^{-}\otimes Q_{x}^{-}\right. \\ \nonumber
&& \;\quad +\left.P_{y}^{+}\otimes Q_{y}^{-}+P_{y}^{-}\otimes Q_{y}^{+}\right].
\end{eqnarray}
Here, the coefficients $\Delta_{ij}$s are zero except for $\Delta_{0,0}=\Delta_{0,d_2-1}=\Delta_{d_1-1,0}=\Delta_{d_1-1,d_2-1}=|\varrho_{1,d}|$, and
 $P_{x}^{\pm}$ and $P_{y}^{\pm}$ are projections on pure states
\begin{eqnarray}
\ket{\pm x }=\frac{1}{\sqrt{2}}(\ket{0}\pm \e^{-i\theta_1/2}\ket{d_1-1}), \\ \nonumber
\ket{\pm y }=\frac{1}{\sqrt{2}}(\ket{0}\pm i\e^{-i\theta_1/2}\ket{d_1-1}),
\end{eqnarray}
of the first subsystem, and $Q_{x}^{\pm}$ and $Q_{y}^{\pm}$ are defined similarly for the second subsystem.

Remarkably, this  separability of a bipartite state for  low purities  is not surprising as
all states in the sufficiently small neighborhood of the maximally mixed state are necessarily  separable. More precisely,    any  bipartite $d_1\times d_2$  mixed state with purity less than $1/(d-1)$ has positive partial transposition  \cite{KarolPRA1998}.  Our result   shows however a wider range of separability for  the optimal state;  the optimal state is separable  for purities less than $\kappa_0=1/d+2/d^2$.

For $\kappa>\kappa_0$, on the other hand, the optimal state for  an  arbitrary $d$-dimensional composite system  is always entangled. To see this,  let us  first consider the case  $\gamma_1\ge 2$ for which the optimal state $\varrho_{\kappa\in [\kappa_0,1]}$ is given by Eq. \eqref{rhoOP-1-2}. Simple calculation shows that the partial transposed matrix $\varrho^{\textrm{T}_1}_{\kappa\in [\kappa_0,1]}$ has  $(d-1)$ positive eigenvalues, namely  $1/d-2x/(d-2)$, $1/d+x$ and $2/d+x(d-4)/(d-2)$ with multiplicities $(d-4)$, $2$ and $1$ respectively, and one negative eigenvalue equal to $\lambda_{-}=-xd/(d-2)$. These eigenvalues are independent of how the composite system $\mathcal{H}$ is partitioned into  two subsystems $\mathcal{H}_A$ and $\mathcal{H}_B$.  Note that $x$, defined by Eq. \eqref{x-1},  is a monotonically increasing function of $\kappa$  and ranges from $0$ to $(d-2)/(2d)$ as $\kappa$ goes  from $\kappa_0$ to $1$.    Accordingly,  negativity is given by $\mathcal{N}(\varrho_{\kappa})=2|\lambda_{-}|=2xd/(d-2)$ which   is monotonically increasing in interval  $\kappa\in [\kappa_0,1]$ and reaches the maximum value of unity at $\kappa=1$.

In  regime $\gamma_1<2$ and $d>4$ for  which  more  threshold ratios  are needed to characterize the optimal state, calculation of the negativity   becomes more complicated for  $\kappa\in[\kappa_0,1]$. However, as we mentioned in Theorem \ref{Theorem-Opt-Main-2}, the optimal state is still given by Eq. \eqref{rhoOP-1-2}  for $\kappa\in[\kappa_0,\kappa_1]$, so that its  negativity is still given by $\mathcal{N}(\varrho_{\kappa\in[\kappa_0,\kappa_1]})=2xd/(d-2)$.    In view of this, the optimal state starts to be entangled at purity $\kappa_0=1/d+2/d^2$ for arbitrary $d$. This entanglement increases monotonically in the interval  $\kappa\in [\kappa_0,\kappa_1]$ and reaches  the maximum value  $\mathcal{N}(\varrho_{\kappa_1})=2x_0d/(d-2)$,  where $x_0$ is defined by Eq. \eqref{x0} (see Fig. \ref{Fig-Concurrence-4} when  $d=4$).

For $\kappa\in[\kappa_1,\kappa_2]$ the optimal state  is given by Eq. \eqref{rhoOP-2-1}, and can be rewritten as
\begin{eqnarray}\nonumber
\varrho_{\kappa\in[\kappa_1,\kappa_2]} = 2\left(1/d+x_0\right)\ket{\Psi_1}\bra{\Psi_1}+\sigma^{\textrm{sep}}_{\kappa\in[\kappa_1,\kappa_2]} \;\; (\textrm{for } \gamma_1< 2),
\end{eqnarray}
where $\sigma^{\textrm{sep}}_{\kappa}$ is a separable (unnormalized) state with a  product representation similar to that given by  Eq. \eqref{rhoOP-0-SEP}. The spectrum of the partial transposed matrix $\varrho^{\textrm{T}_1}_{\kappa\in[\kappa_1,\kappa_2]}$ has one negative eigenvalue $-x_0d/(d-2)$ leading therefore to the negativity $\mathcal{N}(\varrho_{\kappa\in[\kappa_1,\kappa_2]})=2x_0d/(d-2)$ which is equal  to $\mathcal{N}(\varrho_{\kappa_1})$.    This explains  why the entanglement of the optimal state  is constant  in interval $\kappa\in[\kappa_1,\kappa_2]$. This procedure can  be continued until the purity tends to $\kappa=1$,  corresponding to the optimal pure state $P_{\Psi_1}=\ket{\Psi_{1}}\bra{\Psi_1}$ which has the maximum value of unity for  the negativity.

\begin{figure}[t]
\includegraphics[scale=0.75]{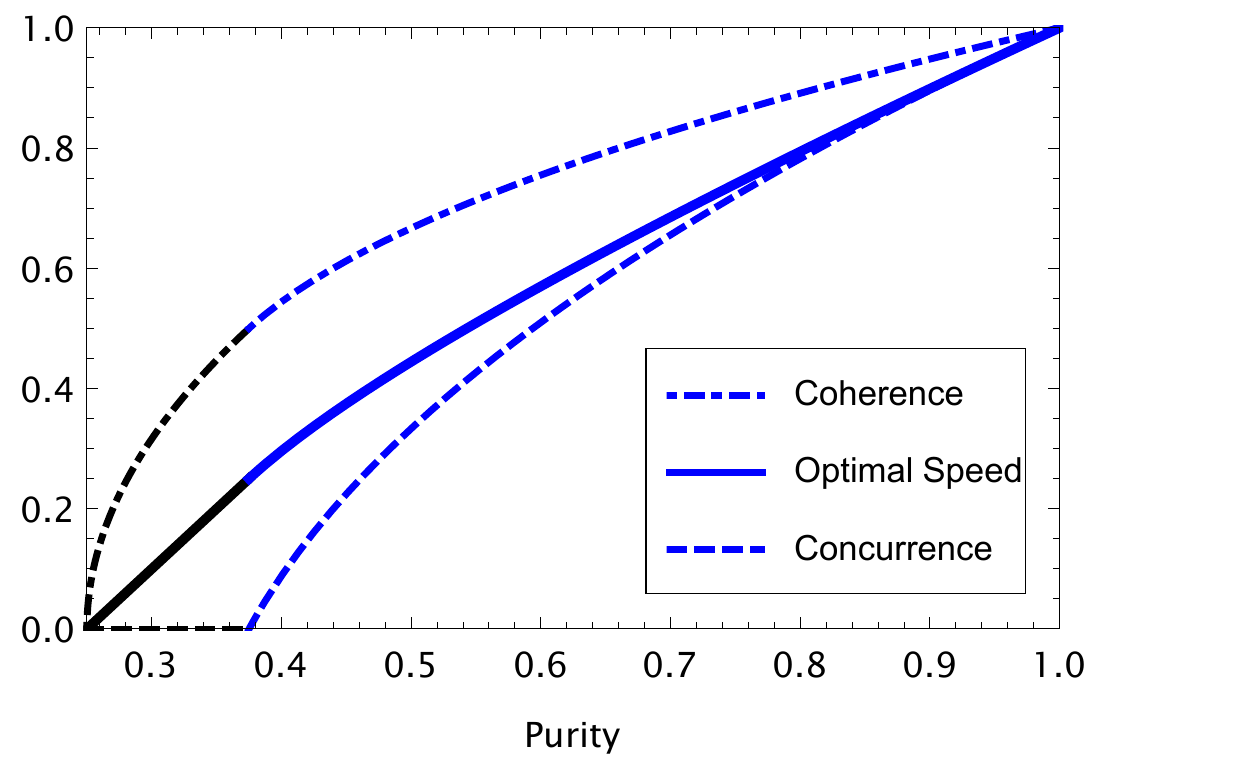}
\caption{Plots of coherence (above-dotdashed curve), squared optimal speed (middle-solid curve), and concurrence (below-dashed curve) in terms of purity $\kappa$ when $\gamma_1=5/2$ ($\omega_{14}=\sqrt{2}$, $\omega_{23}=2/\sqrt{5}$).
The colors correspond to different purity regions: black for $[1/d,\kappa_0]$, and blue for $[\kappa_0,1]$. }
\label{Fig-Concurrence-3}
\end{figure}
\begin{figure}[t]
\includegraphics[scale=0.75]{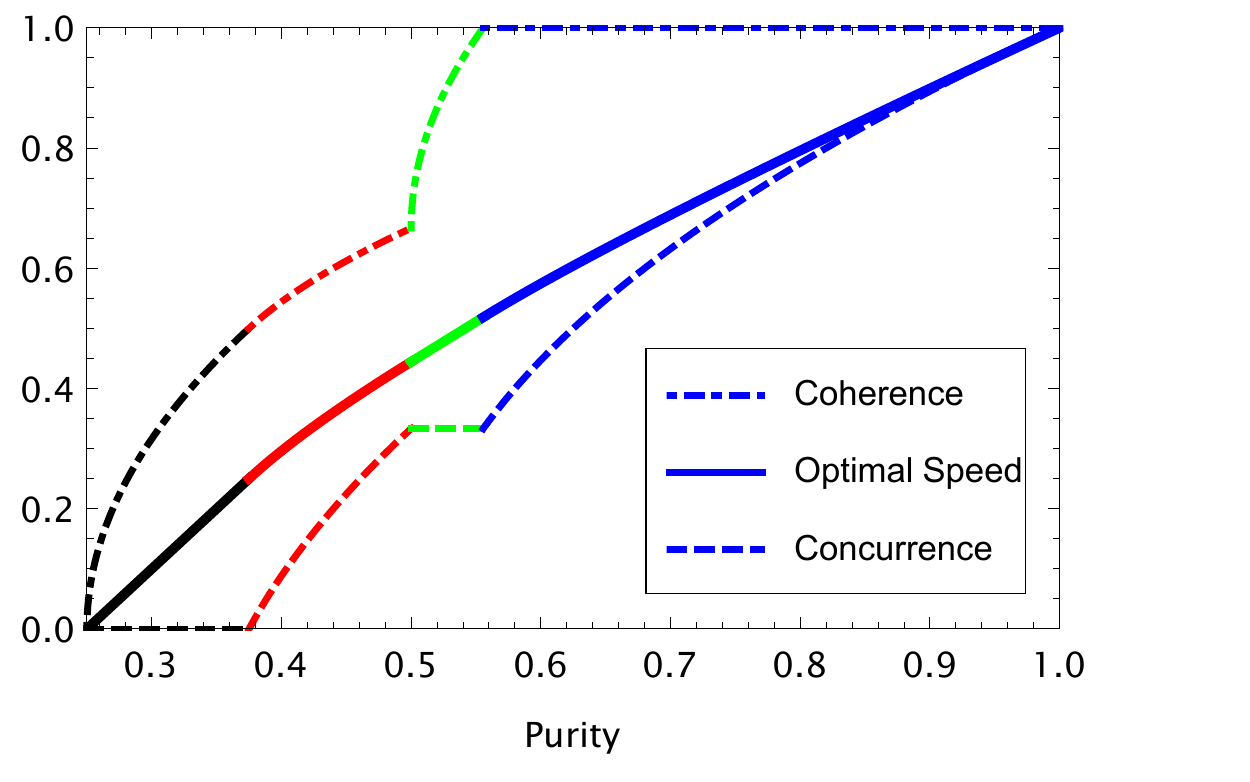}
\caption{Plots of coherence (above-dotdashed curve), squared optimal speed (middle-solid curve), and concurrence (below-dashed curve) in terms of purity $\kappa$ when $\gamma_1=3/2$ ($\omega_{14}=\sqrt{2}$, $\omega_{23}=2/\sqrt{3}$).
The colors correspond to different purity regions: black for $[1/d,\kappa_0]$, red for $[\kappa_0,\kappa_1]$, green for $[\kappa_1,\kappa_2]$, and blue for $[\kappa_2,1]$. }
\label{Fig-Concurrence-4}
\end{figure}

\subsection{Coherence}\label{SubSec-Coherence}
According to the Eq. \eqref{SqSpeed-Unitary-1}, the squared speed of a system evolved unitarily under the Hamiltonian $H$ is nonzero if and only if the system has nonzero coherence in the energy eigenbasis. However, despite the key role of coherence in the evolution speed of the system, not every increase in the coherence  will result in an increase in the speed.   We are also interested in  to find out how the optimal speed   is related to the coherence of the optimal state. To this aim, we use the $l_1$-norm of coherence \cite{PlenioPRL2014}
\begin{eqnarray}\label{QC-l1-norm}
C_{l_1}(\varrho)=\sum_{i\ne j}|\varrho_{ij}|,
\end{eqnarray}
to quantify the coherence of $\varrho$  in the Hamiltonian basis.

Figure \ref{Fig-Concurrence-Sim} shows the numerical simulations of the  squared speed plotted against the coherence for $d=4$ when $\gamma_1<2$. The simulations are  conducted for four  values of purity, corresponding to the four regions $[1/4,\kappa_0]$,  $[\kappa_0,\kappa_1]$, $[\kappa_1,\kappa_2]$, and $[\kappa_2,1]$. For comparison, the optimal state associated with  each purity is also drawn by a red point.
It follows from these  figures that the state with maximum coherence does not always result in the maximum speed. Moreover, the coherence of the optimal state  is not even as large as the mean coherence of the randomly generated states.
\begin{figure}
\includegraphics[scale=0.33]{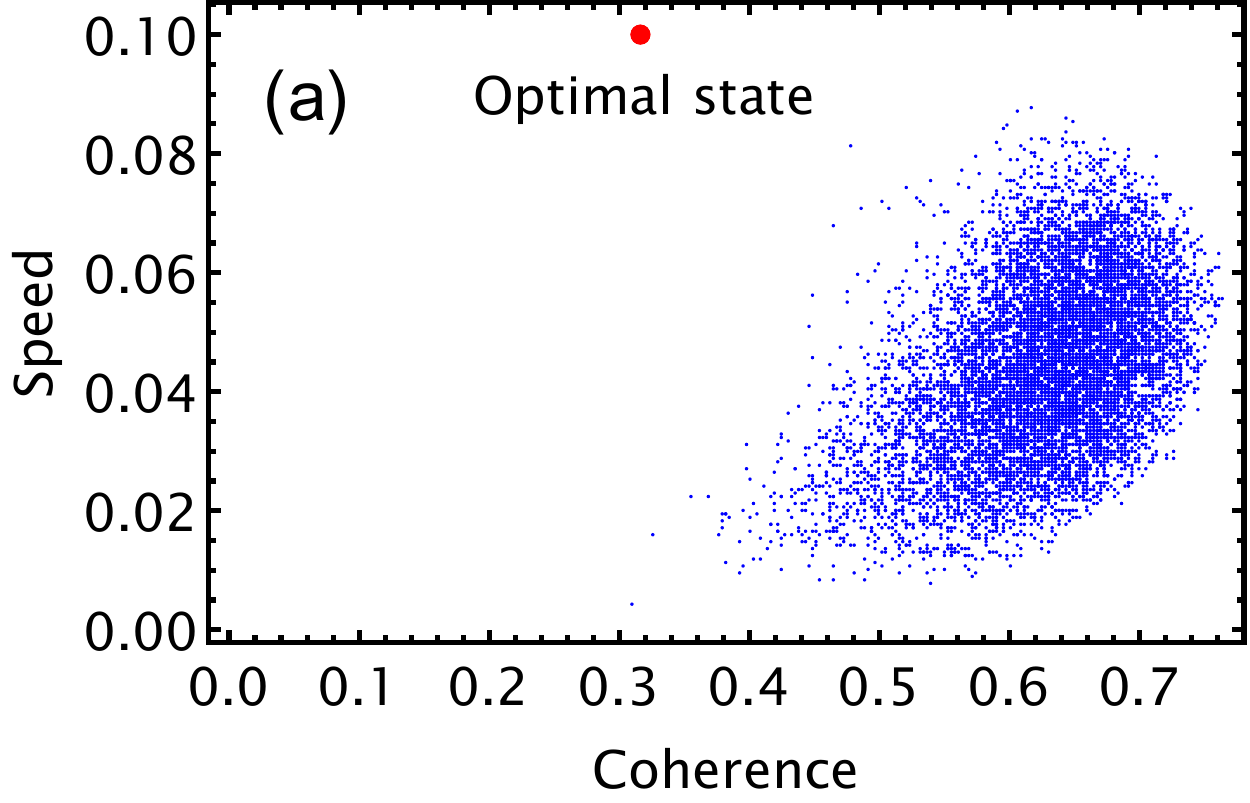}\hfill
\includegraphics[scale=0.33]{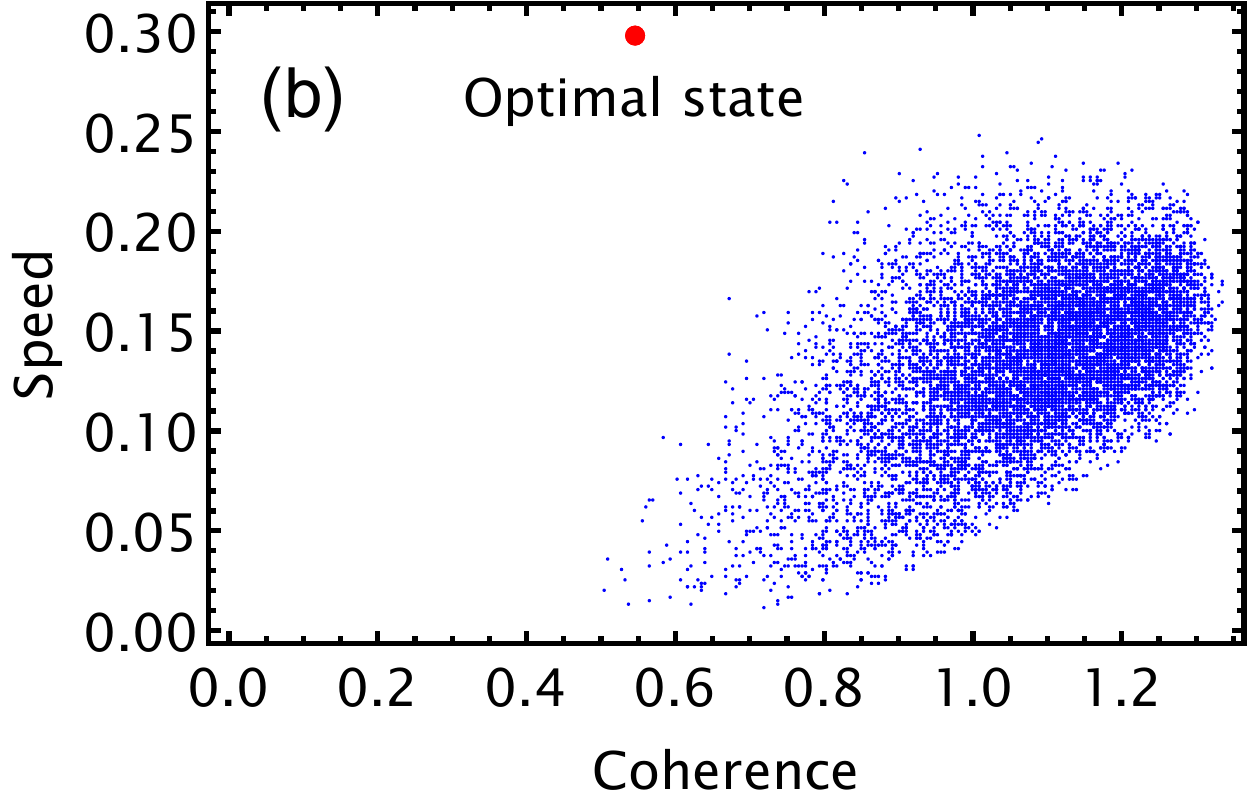}
\\[\bigskipamount]
\includegraphics[scale=0.33]{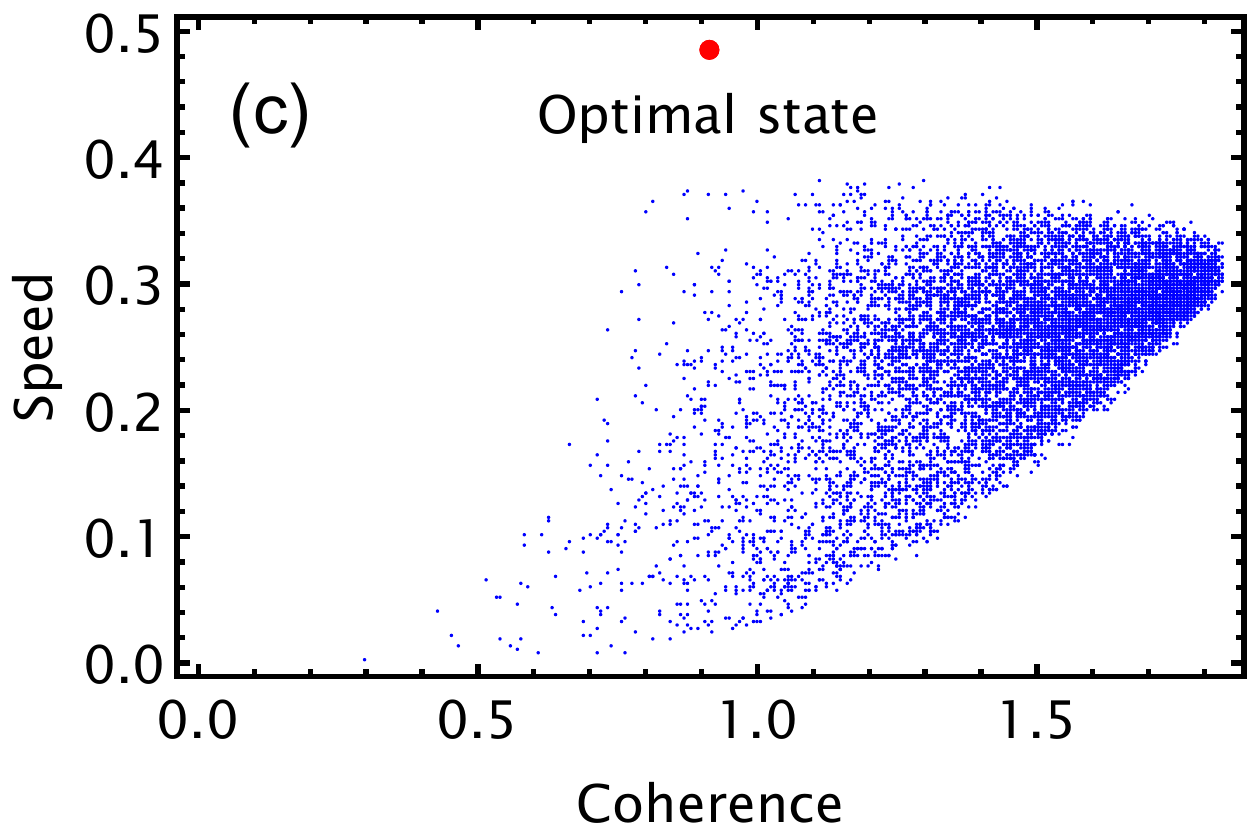}\hfill
\includegraphics[scale=0.33]{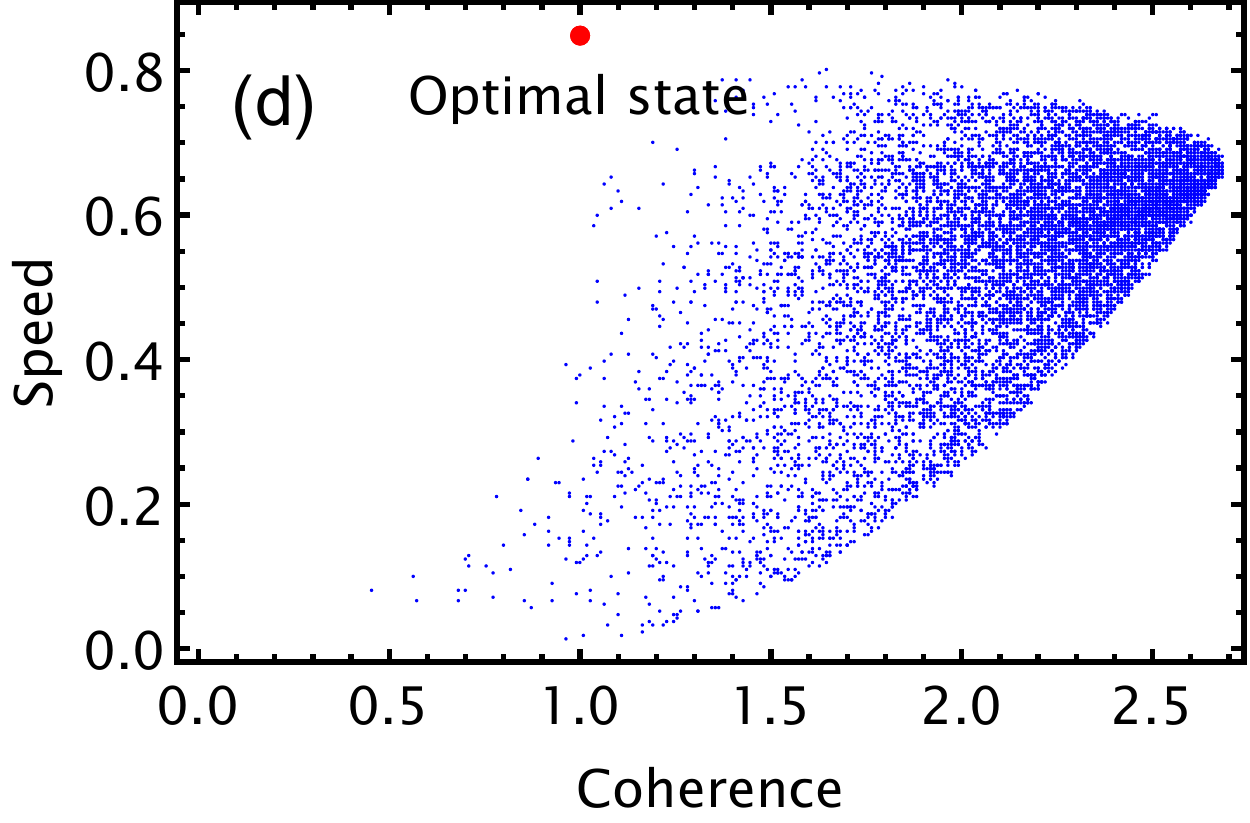}
\caption{Plots of simulations of the squared speed against coherence for $\gamma_1=3/2$ and  different purities: (a) $\kappa=0.3$, (b) $\kappa=0.4$, (c) $\kappa=0.53$, and (d) $\kappa=0.85$. In each plot, the extent of quantum states in the speed-coherence plane is shown by 10000 randomly generated states (blue points). The red points represent  the optimal states.  }
\label{Fig-Concurrence-Sim}
\end{figure}

For the optimal state, given by Eq. \eqref{Optimum-d=4} for $d=4$,  the $l_1$-norm of coherence reduces simply to $C_{l_1}(\varrho)=2(|\rho_{14}|+|\rho_{23}|)$.
Since the optimal states possess the $X$-type symmetry, only the coherence caused by the secondary diameter entries $\varrho_{14}$ and $\varrho_{23}$  have a nonvanishing  effect on the optimal speed,  as such,  the maximum value of the  coherence could  not exceed unity.
Figures \ref{Fig-Concurrence-3} and \ref{Fig-Concurrence-4} illustrate  the coherence, the speed, and the concurrence of the optimal state $\varrho_{\kappa}$  in terms of the purity $\kappa$ for two regimes $\gamma_1\ge 2$ and $\gamma_1< 2$, respectively.
These figures depict the optimal speed as a monotonic function of both quantities  $\mathcal{C}(\varrho_{\kappa})$ and $C_{l_1}(\varrho_{\kappa})$, however, the converse is not  true in general as we already mentioned it  in Figure \ref{Fig-Concurrence-Sim} for the coherence.

A comparison of  Figures \ref{Fig-Concurrence-3} and \ref{Fig-Concurrence-4} shows that when  $\kappa \in [1/4, \kappa_{1}]$, all  mentioned quantities are independent of the value of $\gamma_{1}$ as we expect from Table \ref{Table-d=4}.  When moving away from the maximally mixed state $\Id/4$, namely for $\kappa \in [ \kappa_{1},1]$, their behaviors  become different, depending   on  whether the threshold ratio  $\gamma_1$ is greater than or less than 2.
In particular, for  $\gamma_{1}< 2$,  both coherence and concurrence  have sudden changes at two purities $\kappa_1$ and $\kappa_2$.  At $\kappa=\kappa_1$, the  coherence increasing is suddenly  accelerated,  while the concurrence  increasing is suddenly stopped.  For $\kappa\in[\kappa_1,\kappa_2]$,   any increase in purity is consumed to increase the coherence to its maximum value of unity, while the concurrence maintains its constant value $[2\sqrt{6\kappa_1-2}-1]/6$. At $\kappa=\kappa_2$, on the other hand,  the  coherence increasing is suddenly  stopped after reaching  its maximum value of unity,  while the concurrence  starts to  increase. For the remaining range of purity, i.e., for $\kappa\in[\kappa_2,1]$,  their behavior becomes opposite  in the sense that while coherence remains unchanged, the concurrence is increased in order to reach its maximum value of unity.
These  properties can have applications, for instance,  in the real dynamics where the system-environment interaction is inevitable and the system  loses its purity. In such cases, the optimal-speed dynamics can be  utilized to  reduce  the operation time in order  to maintain  the quantum features such as quantum coherence and quantum entanglement.

Turning our attention to the definition of squared speed given by Eq. \eqref{SqSpeed-Unitary-1}, one can interpret it as  a competition between the energy gaps $\omega_{ij}$ and the off-diagonal entries  $|\varrho_{ij}|$   which have direct contribution to the quantum coherence (see Eq. \eqref{QC-l1-norm}). In view  of the above discussion,    the role of energy gaps is dominant  over the off-diagonal entries when $\gamma_1\ge 2$, however, for $\gamma_1<2$ the roles are reversed, i.e., the coherence plays a greater role than the energy gaps. Using the general definition of $l_1$-norm of coherence,  Eq.  \eqref{QC-l1-norm}, a similar discussion  can be made for the  coherence of the $d$-dimensional optimal state \eqref{Optimal-X-state}.

\section{Conclusion}\label{Sec-Conclusion}
In this paper we have considered the question that how fast can a quantum state  evolve or,  more precisely,  among the set of all quantum states with a given definite   purity $\kappa$,  which  initial state represents the optimal one. Based on the notion of squared speed  defined on the  Euclidean metric \cite{BrodyPRR2019},  an analytical  framework to obtain the optimal speed of a  $d$-dimensional system with unitary evolution generated by  a  time-independent Hamiltonian is presented. The  framework works  step-by-step in the sense that starting from the maximally mixed state having minimum purity and zero speed, the purity is increased optimally  in favor of speed.

In any dimension $d$,  and for a given purity $\kappa$, the fastest state is a $X$-state with an additional property  that it is also symmetric with respect to the secondary diagonal.
The off-diagonal entry $\varrho_{1d}$  has nonzero contribution to all  optimal states with arbitrary purity $\kappa\in(1/d,1]$.
Other off-diagonal entries $\varrho_{i,d-i+1}$, however,  can have a nonzero contribution to the optimal speed depending on the ratio of the  Bohr frequencies.
In particular for   $\gamma_1\ge 2$, where $\gamma_{1}=\omega_{1d}^2/\omega_{2,d-1}^2$,   all entries $\varrho_{i,d-i+1}$ vanish  except $\varrho_{1d}$. In this  regime, we have found a complete description  of the optimal state for an arbitrary dimension $d$.

On the other hand  when  $\gamma_1<2$,  the off-diagonal entry $\varrho_{2,d-1}$  makes a  nonzero contribution to the  optimal states  for all  purities   $\kappa\in(\kappa_1,1]$ where $\kappa_1$ is given by Eq. \eqref{kappa1}.   In this regime,  we have presented  a complete solution for $d\le 4$.  For higher dimensional systems, however,   more threshold parameters  are  needed to decide  whether  or not the other off-diagonal entries   have  nonzero contribution in the optimal state, so that a general treatment to find the optimal state  is more complicated.

Entanglement and coherence of the optimal states and their relations with purity  $\kappa$  have been also investigated. We have found  that for the optimal states both resources are monotonic functions of purity.
Although a nonzero coherence, in the energy eigenbasis,  is a necessary and sufficient condition for the nonzero speed, any  increase in the coherence, however,  will not results in an increase in the speed. For the sufficiently low purities, namely for $\kappa \in [1/d, \kappa_{1}]$, the speed is independent of the value of $\gamma_1$, as such, both quantum  coherence and  quantum entanglement  are independent of the value of $\gamma_{1}$.  For $\kappa \in [ \kappa_{1},1]$, on the other hand,  their behaviors  become different depending   on  whether the threshold parameter $\gamma_1$ is greater than or less than 2.
In particular, both coherence and concurrence  display sudden changes in their behaviors when $\gamma_1 <2$, but it is not the case  for $\gamma_1\ge 2$.  Moreover, our  results show that    the role of energy gaps is dominant  over the off-diagonal entries when $\gamma_1\ge 2$, however, for $\gamma_1<2$  the coherence plays the dominant role.

These results could  find applications  in real dynamics where the system-environment interaction is inevitable and the system  loses its purity. In such cases, the optimal-speed dynamics can be  utilized to  reduce  the operation time in order  to maintain   quantum features such as quantum coherence and quantum entanglement.
Having the states  with maximum speed for  each purity $\kappa\in[1/d,1]$,   they can be used to construct a tight QSL.
Due to the widespread use of the QSL  in diverse branches of quantum technology, it is hoped that our results should shed  light on
the several applications including   quantum communication, quantum computation, quantum metrology,
and especially quantum  control.

In open dynamics the purity is not preserved in general, so the generalization of our method  to open dynamics is not straightforward.  In such systems, the radial component of the speed will obtain a nonvanishing contribution to the speed,  leading therefore to the change of state purity. This component, which is expressed in terms of Lindblad operators, vanishes for a unitary evolution \cite{BrodyPRR2019}, so the purity remains constant. To obtain the fastest states in an open system, both radial and tangential components of the speed must be optimized simultaneously. This problem  can be considered for future works.

\section*{acknowledgment}
This work was supported by Ferdowsi University of Mashhad under Grant No.  3/55339 (1400/06/28).

\appendix
\section{Proof of Proposition \ref{Prop-Xstate}}\label{Appendix-Prop-Xstate}
The following two lemmas are essential in proving Proposition \ref{Prop-Xstate}.
\begin{lemma}\label{Lemma-Rho1d=Rhojd}
Suppose $\varrho$ is a  $d$-dimensional density matrix, and $M(1,d)$ is  its 2-dimensional minor obtained by removing all rows and columns except rows and columns $1,d$. If     $\varrho_{11}=\varrho_{dd}=\varrho_{1d}e^{-i\theta}\ne 0$, then    $\varrho_{jd}=\varrho^\ast_{1j}e^{i\theta}$ for $1< j < d$.  Moreover,  vector $\ket{\chi}=(1,0,\cdots,0,-e^{-i\theta})^{\T}$ belongs to $\textrm{Ker}\{\varrho\}$, i.e.  $\varrho\ket{\chi}=0$. Similarly,  for all $2$-order minors $M(i_1,i_2)$,  $1\le i_1 < i_2 \le d$, for which   $\varrho_{i_1i_1}=\varrho_{i_2i_2}=|\varrho_{i_1i_2}|\ne 0$, then  $|\varrho_{ji_1}|=|\varrho_{ji_2}|$ for $j\ne i_1,i_2$.
\end{lemma}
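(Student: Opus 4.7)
The plan is to exploit positive semi-definiteness of $\varrho$ together with the fact that the hypothesis forces the $2\times 2$ principal minor $M(1,d)$ to be singular, so that the natural null vector of that minor, extended by zeros, is automatically a null vector of the whole density matrix.

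First I would verify that the assumption $\varrho_{11}=\varrho_{dd}=|\varrho_{1d}|$ (with phase encoded by $\varrho_{1d}=\varrho_{11}e^{i\theta}$) makes $\det M(1,d)=\varrho_{11}\varrho_{dd}-|\varrho_{1d}|^2=0$. Then I would compute $\langle\chi|\varrho|\chi\rangle$ directly, using that only the indices $i,j\in\{1,d\}$ contribute because $\chi_j=0$ for $1<j<d$. A short calculation with $\chi_1=1$, $\chi_d=-e^{-i\theta}$ and $\varrho_{d1}=\varrho_{1d}^{\ast}=\varrho_{11}e^{-i\theta}$ collapses the four surviving terms to $\varrho_{dd}-\varrho_{11}=0$.

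Next I would invoke the standard fact that for any positive semi-definite operator $A$, $\langle v|A|v\rangle=0$ implies $A|v\rangle=0$ (equivalently, from $A=B^{\dagger}B$ one has $\|B|v\rangle\|=0$). Applied to $A=\varrho$ and $|v\rangle=|\chi\rangle$, this gives the second claim of the lemma, namely $|\chi\rangle\in\textrm{Ker}\{\varrho\}$. Reading off the $j$-th component of the vector equation $\varrho|\chi\rangle=0$ for $1<j<d$ yields $\varrho_{j1}-e^{-i\theta}\varrho_{jd}=0$, i.e.\ $\varrho_{jd}=e^{i\theta}\varrho_{j1}=e^{i\theta}\varrho_{1j}^{\ast}$, which is the first claim.

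The generalization to an arbitrary pair $(i_1,i_2)$ requires only cosmetic changes: write $\varrho_{i_1 i_2}=|\varrho_{i_1 i_2}|e^{i\phi}$, form the vector $|\chi\rangle$ with $\chi_{i_1}=1$, $\chi_{i_2}=-e^{-i\phi}$ and zeros elsewhere, reproduce the computation $\langle\chi|\varrho|\chi\rangle=0$, deduce $\varrho|\chi\rangle=0$, and extract $\varrho_{j i_2}=e^{i\phi}\varrho_{j i_1}$ for $j\ne i_1,i_2$; taking moduli gives $|\varrho_{j i_1}|=|\varrho_{j i_2}|$. I do not anticipate a serious obstacle: the only nontrivial ingredient is the PSD implication $\langle v|\varrho|v\rangle=0 \Rightarrow \varrho|v\rangle=0$, which is elementary via the spectral decomposition of $\varrho$ or via Cauchy--Schwarz applied to $\sqrt{\varrho}$.
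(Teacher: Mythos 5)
Your proof is correct, and it takes a genuinely different route from the paper. The paper argues via the nonnegativity of the $3\times 3$ principal minor on rows and columns $1,j,d$: under the hypothesis its determinant evaluates to $-\varrho_{11}\,|\varrho_{1j}-\varrho_{jd}^{\ast}e^{i\theta}|^{2}$, which forces $\varrho_{1j}=\varrho_{jd}^{\ast}e^{i\theta}$. You instead observe that the hypothesis makes $\langle\chi|\varrho|\chi\rangle=\varrho_{dd}-\varrho_{11}=0$ for $|\chi\rangle=(1,0,\dots,0,-e^{-i\theta})^{\T}$, invoke the standard PSD implication $\langle v|\varrho|v\rangle=0\Rightarrow\varrho|v\rangle=0$, and read the relation $\varrho_{jd}=e^{i\theta}\varrho_{1j}^{\ast}$ off the components of $\varrho|\chi\rangle=0$. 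The two arguments are essentially equivalent in content (the vanishing of that determinant is the same degeneracy your kernel vector detects), but your version has a concrete advantage: it establishes the ``moreover'' claim $|\chi\rangle\in\textrm{Ker}\{\varrho\}$ as the central step rather than leaving it implicit, whereas the paper's one-line determinant computation proves only the off-diagonal relation and never explicitly justifies the kernel statement. Your treatment of the general pair $(i_1,i_2)$ by relabeling is also fine; the only point worth stating carefully there is that the conclusion is the phase relation $\varrho_{j i_2}=e^{i\phi}\varrho_{j i_1}$, of which the lemma's $|\varrho_{j i_1}|=|\varrho_{j i_2}|$ is the modulus, exactly as you note.
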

\begin{proof}
The positivity of $\varrho$ requires that all $k$-order ($1\le k\le d$) principal minors of $\varrho$ be nonnegative. Consider the $3$-order principal minors obtained by removing all rows and columns except rows and columns $1,j,d$ for a fixed value of   $j\in [2,\cdots, d-1]$. One can easily show that the determinant of this minor equals $D_3(1,j,d)=-\varrho_{11}|\varrho_{1j}-\varrho_{jd}^\ast\e^{i\theta}|^2$,   which  is negative unless $\varrho_{1j}=\varrho_{jd}^\ast\e^{i\theta}$.
\end{proof}

\begin{lemma}\label{Lemma-Rho1j=zero}
Suppose   $\varrho$  be a state with purity $\kappa$ such that  $\varrho_{11}=\varrho_{dd}=|\varrho_{1d}|\ne 0$. If  $\varrho$ be a state with optimal speed, then  $\varrho_{1j}=0$ for $j=2,\cdots,d-1$. In this case,   $\ket{\Psi_1}=\frac{1}{\sqrt{2}}(\ket{E_1}+\e^{-i\theta}\ket{E_d})$ and $\ket{\Psi_1^{\perp}}=\frac{1}{\sqrt{2}}(\ket{E_1}-\e^{-i\theta}\ket{E_d})$ belongs to the support and kernel of $\varrho$, respectively, in the sense that  $\varrho\ket{\Psi_1}=2\varrho_{11}\ket{\Psi_1}$ and $\varrho\ket{\Psi_1^\perp}=0$.
This result is general in the sense that if  $\varrho_{ii}=\varrho_{d-i+1,d-i+1}=|\varrho_{i,d-i+1}|\ne 0$ for some $i\in \{1,\cdots,\lfloor d/2\rfloor\}$, and $\varrho$ be a state with optimal speed, then  $\varrho_{ij}=0$  for   $j=i+1,i+2, \cdots,d-i$. In this case,   $\ket{\Psi_i}=\frac{1}{\sqrt{2}}(\ket{E_i}+\e^{-i\theta_i}\ket{E_{d-i+1}})$ and $\ket{\Psi_i^{\perp}}=\frac{1}{\sqrt{2}}(\ket{E_i}-\e^{-i\theta_i}\ket{E_{d-i+1}})$ belongs to the support and kernel of $\varrho$, respectively, in the sense that  $\varrho\ket{\Psi_i}=2\varrho_{ii}\ket{\Psi_i}$ and $\varrho\ket{\Psi_i^\perp}=0$.
\end{lemma}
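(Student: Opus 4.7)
The plan is to combine Lemma \ref{Lemma-Rho1d=Rhojd} with optimality. The hypothesis $\varrho_{11}=\varrho_{dd}=|\varrho_{1d}|$ is exactly what Lemma \ref{Lemma-Rho1d=Rhojd} requires, and it yields both $\varrho_{jd}=\varrho_{1j}^\ast e^{i\theta}$ (so $|\varrho_{1j}|=|\varrho_{jd}|$ for all $1<j<d$) and the fact that $\ket{\Psi_1^\perp}=\tfrac{1}{\sqrt{2}}(\ket{E_1}-e^{-i\theta}\ket{E_d})$ already lies in the kernel of $\varrho$. Granting for a moment the main assertion $\varrho_{1j}=0$ for $j\in\{2,\ldots,d-1\}$, the remaining support/kernel statement becomes a one-line check: all $(1,k)$ and $(k,d)$ entries with $k\ne 1,d$ then vanish, so $\varrho\ket{\Psi_1}=\tfrac{1}{\sqrt{2}}(\varrho\ket{E_1}+e^{-i\theta}\varrho\ket{E_d})$ collapses to $2\varrho_{11}\ket{\Psi_1}$, while $\varrho\ket{\Psi_1^\perp}=0$ is inherited from Lemma \ref{Lemma-Rho1d=Rhojd}.

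To prove $\varrho_{1j}=0$ I would use optimality through a perturbation argument. The key rewrite, obtained by splitting the density matrix into the $\{E_1,E_d\}$-block, the cross terms, and the middle block $R=\sum_{j,k=2}^{d-1}\varrho_{jk}\ket{E_j}\bra{E_k}$, is
\begin{align*}
v^2(\varrho)&=2\varrho_{11}^2\omega_{1d}^2+2\sum_{j=2}^{d-1}|\varrho_{1j}|^2(\omega_{1j}^2+\omega_{jd}^2)+v^2(R),\\
\kappa&=4\varrho_{11}^2+4\sum_{j=2}^{d-1}|\varrho_{1j}|^2+\Tr R^2.
\end{align*}
The essential quantitative input is the strict inequality
\[
\omega_{1d}^2=(\omega_{1j}+\omega_{jd})^2=\omega_{1j}^2+\omega_{jd}^2+2\omega_{1j}\omega_{jd}>\omega_{1j}^2+\omega_{jd}^2,
\]
which says that one unit of purity invested in the $\ket{\Psi_1}\bra{\Psi_1}$ channel produces strictly more squared speed than one unit invested in any $(1,j)/(j,d)$ pair. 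Assuming for contradiction $\varrho_{1j_0}\ne 0$, I would build an infinitesimal, Hermitian, trace- and purity-preserving perturbation that lowers $|\varrho_{1j_0}|=|\varrho_{j_0 d}|$, compensates the purity loss by raising $\varrho_{11}=\varrho_{dd}=|\varrho_{1d}|$ simultaneously (so the rank-one relation on the $(1,d)$ minor is preserved), and adjusts the middle diagonals to maintain the trace. The efficiency inequality then yields $\Delta v^2>0$ at leading order, contradicting optimality.

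The hard part is controlling the PSD constraint, since $\varrho$ already sits on the boundary of the positive cone with $\ket{\Psi_1^\perp}$ as the distinguished null direction. I would restrict the admissible perturbations $\Delta$ to those that leave $\ket{\Psi_1^\perp}$ in the null space, so that by first-order perturbation theory the rest of the spectrum stays strictly positive for small $t>0$ and $\varrho+t\Delta\ge 0$ is automatic. When the naive reallocation is infeasible---e.g.\ when the middle diagonal that must shrink is already too small---it is cleaner to invoke the equivalent KKT stationarity of the Lagrangian, which reads $(\omega_{ij}^2-\lambda)\varrho_{ij}=-\Lambda_{ji}$ with dual variable $\Lambda=\alpha\ket{\Psi_1^\perp}\bra{\Psi_1^\perp}$, $\alpha\ge 0$. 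Since $(\Psi_1^\perp)_k=0$ for $k\in\{2,\ldots,d-1\}$, the right-hand side at $(1,k)$ vanishes and the equation reduces to $(\omega_{1k}^2-\lambda)\varrho_{1k}=0$; the $(1,d)$ equation forces $\lambda\le\omega_{1d}^2$, and the strict inequality $\omega_{1k}^2<\omega_{1d}^2$ (valid whenever $E_1$ and $E_d$ are nondegenerate) then compels $\varrho_{1k}=0$. The companion $\varrho_{kd}=0$ follows from $|\varrho_{1k}|=|\varrho_{kd}|$, and the general statement for arbitrary $i\in\{1,\ldots,\lfloor d/2\rfloor\}$ is obtained by repeating the same argument on the $2\times 2$ principal minor at rows and columns $(i,d-i+1)$, with $\ket{\Psi_i}$ and $\ket{\Psi_i^\perp}$ replacing $\ket{\Psi_1}$ and $\ket{\Psi_1^\perp}$.
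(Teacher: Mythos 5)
Your overall strategy is the paper's: reduce the problem via Lemma \ref{Lemma-Rho1d=Rhojd}, write the speed and purity in the block-decomposed form, and extract $\varrho_{1j}=0$ from first-order stationarity together with the strict inequality $\omega_{1d}^2>\omega_{1j}^2+\omega_{jd}^2$. The support/kernel verification and the reduction of the general-$i$ case to the $i=1$ case are also fine. The problem is the decisive last step of your KKT argument. From $(\omega_{1k}^2-\lambda)\varrho_{1k}=0$ you conclude $\varrho_{1k}=0$ because $\lambda\le\omega_{1d}^2$ and $\omega_{1k}^2<\omega_{1d}^2$; but these two inequalities are perfectly consistent with $\lambda=\omega_{1k}^2$ for some $k$, in which case the stationarity equation places no constraint on $\varrho_{1k}$ at all. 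As written, the multiplier is only bounded, not determined, so the conclusion does not follow. (Your fallback perturbation argument is also not a complete repair: varying $\varrho_{1k}$ independently of $\varrho_{kd}$ exits the positive cone, since Lemma \ref{Lemma-Rho1d=Rhojd} forces $\varrho_{kd}=\varrho_{1k}^\ast\e^{i\theta}$, so the admissible perturbations are not the ones your rank-one dual $\Lambda=\alpha\ket{\Psi_1^{\perp}}\bra{\Psi_1^{\perp}}$ encodes.)

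The paper closes exactly this gap by exploiting the linkage you already derived but then abandoned: since positivity forces $\varrho_{kd}=\varrho_{1k}^\ast\e^{i\theta}$, the pair $(\varrho_{1k},\varrho_{kd})$ is a single real degree of freedom $|\varrho_{1k}|$, contributing $4|\varrho_{1k}|^2$ to the purity and $2|\varrho_{1k}|^2\Omega_k^2$ to the speed with $\Omega_k^2=\omega_{1k}^2+\omega_{kd}^2$. The stationarity conditions then read
\begin{equation*}
\varrho_{11}\bigl(2\mu-\omega_{1d}^2\bigr)=0,\qquad |\varrho_{1k}|\bigl(2\mu-\Omega_k^2\bigr)=0,
\end{equation*}
and the hypothesis $\varrho_{11}\ne 0$ \emph{pins} the multiplier exactly, $\mu=\omega_{1d}^2/2$, rather than merely bounding it. Since $\Omega_k^2=\omega_{1d}^2-2\omega_{1k}\omega_{kd}<\omega_{1d}^2=2\mu$ strictly (for $1<k<d$), the second equation forces $|\varrho_{1k}|=0$. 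If you rewrite your KKT step with the linked variable and the exactly determined multiplier, your proof becomes the paper's.
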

\begin{proof}
We provide a proof for the  case of $i=1$; the proof for an arbitrary $i$ is similar. First note that the case $\varrho_{11}=0$ gives immediately $\varrho_{1j}=0$, a result obtained by nonnegativity of the determinant of $M(1,j)$, so we proceed with the assumption that $\varrho_{11}\ne 0$.
From Eq.  \eqref{SqSpeed-Unitary-1} and applying  Lemma \ref{Lemma-Rho1d=Rhojd}, the squared speed of $\varrho$ is given by
\begin{eqnarray}
v^2(\varrho)=2\left[\varrho_{11}^2\omega_{1d}^2+\sum_{j=2}^{d-1}|\varrho_{1j}|^2\Omega^2_{j}+\sum_{j=3}^{d-1}\sum_{i=2}^{j-1}|\varrho_{ij}|^2\omega^2_{ij}\right],
\end{eqnarray}
where $\Omega^2_{j}=\omega_{1j}^2+\omega_{jd}^2$ for $j=2,\cdots,d-1$.
The optimum squared speed is obtained by maximizing the above speed subject to  the purity constraint $\kappa-\Tr\varrho^2=0$ where
\begin{eqnarray}
\Tr\varrho^2=4\varrho_{11}^2+\sum_{i=2}^{d-1}\varrho_{ii}^2+4\sum_{j=2}^{d-1}|\varrho_{1j}|^2+2\sum_{j=3}^{d-1}\sum_{i=2}^{j-1}|\varrho_{ij}|^2.
\end{eqnarray}
Note that the squared speed is not a function of the diagonal elements $\varrho_{ii}$, except for $\varrho_{11}$  because of its relation to the off-diagonal element $\varrho_{1d}=\varrho_{11}\e^{i\theta}$. Therefore,   the normalization condition $\sum_{i=1}^d\varrho_{ii}=1$ is not relevant  as an equality condition, however, we should check the inequality $0\le \varrho_{11}\le 1/2$ together with the positivity condition $\varrho\ge 0$ at the end of our calculations.

Using the method of Lagrange multiplier, we have to maximize $v^2(\varrho)$ subject to the purity condition $\kappa-\Tr\varrho^2=0$. Denoting by $\mu$ the corresponding Lagrange multiplier, and writing ${\tilde{v}}^2(\varrho)=v^2(\varrho)+\mu (\kappa-\Tr{\varrho^2})$,    we have the following relations from $\partial {\tilde{v}}^2(\varrho)/\partial \varrho_{11}=0$, $\partial {\tilde{v}}^2(\varrho)/\partial |\varrho_{1j}|=0$ ($2\le j\le d-1$), and $\partial {\tilde{v}}^2(\varrho)/\partial |\varrho_{ij}|=0$ ($2\le i<j\le d-1$), respectively
\begin{eqnarray}\nonumber
\varrho_{11}(2\mu-\omega_{1d}^2)=0,  \;  |\varrho_{1j}|(2\mu-\Omega^2_{j})=0,  \; |\varrho_{ij}|(\mu-\omega^2_{ij})=0.
\end{eqnarray}
Since $\varrho_{11}\ne 0$,  the first relation determines the Lagrange multiplier as $\mu=\omega_{1d}^2/2$. Accordingly, for the second relation,  $\mu=\Omega^2_{j}/2$ is not an acceptable solution unless $\omega_{1d}=\Omega_{j}$, which is not possible as  $\Omega^2_{j}=\omega_{1d}^2-2\omega_{1j}\omega_{jd}<\omega_{1d}^2$; as such,  we provide a prove for the assertion that  for the optimal  state $\varrho_{1j}=0$ for $j=2,\cdots,d-1$. For the third relation, $\varrho_{ij}\ne 0$ can be an acceptable solution in some particular cases, but we postpone its discussion while we provide  proofs for   Theorems \ref{Theorem-Opt-Main-1} and \ref{Theorem-Opt-Main-2}.
\end{proof}
We are now in the position to proof the Proposition \ref{Prop-Xstate}.
That the optimality  requires the state possess the  $X$-type symmetry is  a consequence of the successive application of the Lemmas \ref{Lemma-Rho1d=Rhojd} and \ref{Lemma-Rho1j=zero}.  The additional property of being symmetric with respect to the secondary diagonal, i.e.,  $\varrho_{ij}=\varrho_{d-j+1,d-i+1}$ for all $i,j$, comes from the Step 1 below.

\section{Proofs  of   Theorems \ref{Theorem-Opt-Main-1} and \ref{Theorem-Opt-Main-2}}\label{Appendix-Theorem-Opt-Main-1,2}
\subsection{Proof of  Theorem \ref{Theorem-Opt-Main-1}}\label{Appendix-Theorem-Opt-Main-1}
To prove  Theorems \ref{Theorem-Opt-Main-1}, we follow the following steps.

\subsection*{Step 1 ($\kappa \ge \kappa_0$)}
We start from  Eq.  \eqref{rhoOP-0} when  $\kappa$ reaches  its maximum value $\kappa_0$, i.e.,
\begin{eqnarray}
\varrho_{{\kappa_0}}= \frac{2}{d}\ket{\Psi_1}\bra{\Psi_1}
+\frac{1}{d}\sum_{i=2}^{d-1}\ket{E_i}\bra{E_i},
\end{eqnarray}
where $\ket{\Psi_1}=\frac{1}{\sqrt{2}}(\ket{E_1}+\e^{-i\theta_1}\ket{E_d})$ is a state with maximum squared speed.
Clearly $\varrho_{\kappa_0}$  no longer contains $\ket{\Psi_1^{\perp}}=\frac{1}{\sqrt{2}}(\ket{E_1}-\e^{-i\theta_1}\ket{E_d})$ in its range, as such  $\textrm{rank}\{\varrho_{\kappa_0}\}=d-1$.

With such starting point, the squared speed   can be increased by increasing the purity to  $\kappa>\kappa_0$ via two strategies: (i)  Increasing some off-diagonal entries $\varrho_{ij}$ except $\varrho_{1d}$, but keeping the diagonal entries  fixed. (ii) Increasing the off-diagonal entry $\varrho_{1d}$ which, simultaneously, requires increasing one or both diagonal entries $\varrho_{11}$ and $\varrho_{dd}$ in order to maintain the positivity of $|M(1,d)|$. In this case, the diagonal elements $\varrho_{jj}$ for $j=2,\cdots,d-1$ should be appropriately   changed  to maintain the state normalized.
The tradeoff  between $\sum_{i=1}^{d} \varrho_{ii}^{2} $ and $|\varrho_{1d}|^2$  should be optimized in the sense that  no increase in purity should occur unless the speed increased  as much as possible. In the light of this, the restriction  $|\varrho_{1d}|^{2}\le \varrho_{11} \varrho_{dd}$ provides a better limit if we  choose  $|\varrho_{1d}|^{2}=\varrho_{11} \varrho_{dd}$ and  $\varrho_{11}=\varrho_{dd}$.
It turns out therefore  that the off-diagonal element $|\varrho_{1d}|$ can touch a new record as $\varrho_{11}=\varrho_{dd}=1/d+x$ with $x\ge 0$. Accordingly, the optimal state for $\kappa\ge \kappa_0$    should be  searched  among the following states
\begin{eqnarray}\label{Rho-Op-k>k0}\nonumber
\varrho_{\kappa} &=& 2\left(\frac{1}{d}+x\right)\ket{\Psi_1}\bra{\Psi_1} +\left(\frac{1}{d}-\frac{2x}{d-2}\right)\sum_{i=2}^{d-1}\ket{E_i}\bra{E_i} \\ \label{Rho-Op-k>k0}
&+& \left(\sum_{i=2}^{d-2}\sum_{j=i+1}^{d-1}\varrho_{ij}\ket{E_i}\bra{E_j}+\textrm{C.C.}\right),
\end{eqnarray}
which gives the first strategy for $ x=0 $ and the second strategy for $x \neq 0  $.
In writing equation above,  we have used Lemmas \ref{Lemma-Rho1d=Rhojd} and  \ref{Lemma-Rho1j=zero} to set   $\varrho_{1j}=\varrho_{jd}^\ast\e^{i\theta_j}=0$  for $j=2,\cdots,d-1$.

Next, from Eq. \eqref{SqSpeed-Unitary-1}, one can write the squared speed of the state \eqref{Rho-Op-k>k0} as
\begin{eqnarray}
v^2(\varrho_{\kappa})=2\left[(1/d+x)^2\omega_{1d}^2+\sum_{j=i+1}^{d-1}\sum_{i=2}^{d-2}|\varrho_{ij}|^2\omega^2_{ij}\right],
\end{eqnarray}
which  should be  maximized subject to the purity condition $\kappa-\Tr\varrho_{\kappa}^2=0$ where
\begin{eqnarray}\nonumber
\Tr\varrho_{\kappa}^2 &=&4\left(\frac{1}{d}+x\right)^2+(d-2)\left(\frac{1}{d}-\frac{2x}{d-2}\right)^2\\
&+& 2\sum_{j=i+1}^{d-1}\sum_{i=2}^{d-2}|\varrho_{ij}|^2.
\end{eqnarray}
Denoting by $\mu$ the corresponding Lagrange multiplier, and writing ${\tilde{v}}^2(\varrho_{\kappa})=v^2(\varrho_{\kappa})+\mu (\kappa-\Tr{\varrho_{\kappa}^2})$,    we have the following conditions  from $\partial {\tilde{v}}^2(\varrho_{\kappa})/\partial x=0$ and $\partial {\tilde{v}}^2(\varrho_{\kappa})/\partial |\varrho_{ij}|=0$,  respectively
\begin{itemize}
\item[C1:] \quad $x\left[\frac{2(d-1)}{d-2}\mu-\omega_{1d}^2\right]-\frac{1}{d}[\omega_{1d}^2-\mu]=0$,
\item[C2:] \quad $|\varrho_{ij}|\;[\mu-\omega^2_{ij}]=0$, \quad for \quad $2\le i< j \le d-1$.
\end{itemize}
From C1,  the positivity of $x$ requires  $\frac{d-2}{2(d-1)}\omega_{1d}^2\le \mu\le \omega_{1d}^2$. On the other hand, the condition $\varrho_{11}\le 1/2$ (or equivalently $x\le \frac{d-2}{2d}$),   gives $\mu \ge \omega_{1d}^2/2$, so that
\begin{equation}\label{mu-Bounds-4}
 \omega_{1d}^2/2\le \mu \le  \omega_{1d}^2.
\end{equation}
Both  inequalities are tight in the sense that the lower and upper bounds  saturate  for $\kappa=1$ and $\kappa=\kappa_0$, respectively. In the light of  this,  condition C1 reads
\begin{eqnarray}\label{x}
x=\frac{\omega_{1d}^2-\mu}{d\left[\frac{2(d-1)}{d-2}\mu-\omega_{1d}^2\right]}.
\end{eqnarray}
Condition C2, on the other hand,  gives $\varrho_{ij}=0$ or/and  $ \mu =\omega^2_{ij}$ for $2\le i< j \le d-1$. The Lagrange multiplier $\mu$ can be determined   only once,  implying that $ \mu =\omega^2_{ij}$ can be an acceptable solution for at most one pair of $i,j$\footnote{It may happens in some cases that the Bohr frequencies  $\omega_{ij}$ be equal for two or more pairs of $i,j$s, however, it follows from  $\sum_{i=2}^{d-2}\omega_{i,i+1}^2\le\omega_{2,d-1}^2$ that even in these cases the contribution of   $\omega_{2,d-1}$   will be more than the sum of the  others.} for which  $\omega_{1d}^2/2\le \omega_{ij}^2 \le  \omega_{1d}^2$, as such   most $\varrho_{ij}$s  should be vanished.  As $\omega_{ij}$ always satisfies the upper bound  $\omega_{ij}\le \omega_{1d}$, we need to check whether $\omega^2_{1d}/2\le \omega^2_{ij}$ or not.  However, as soon as the lower bound  is satisfied  by  one $\omega_{ij}$, it is also satisfied by  the largest one, namely $\omega_{2,d-1}$, which plays more important role  in the squared speed.  In view of this, we define $\gamma_{1}=\omega_{1d}^2/\omega_{2,d-1}^2>1$ which depends solely on the energy spectra  of the Hamiltonian, and  continue  our analysis in two different regimes; (i) $\gamma_{1}\ge 2$ and  (ii) $\gamma_{1}<2$.

\subsection*{Step 2 ($\gamma_1\ge 2,\quad \kappa\in[\kappa_0,1]$)}
If $\gamma_{1}\ge 2$, then $ \mu =\omega^2_{2,d-1}$ is  not an acceptable solution, so   $\varrho_{2,d-1} =0$. The same is true for all other  $\mu=\omega^2_{ij}$, accordingly,  $\varrho_{ij}=0$  for all possible values  $2\le i< j\le d-1$, and we arrive at a state given by  Eq. \eqref{rhoOP-1-2} for  the optimal state when  $\kappa\in[\kappa_0,1]$, i.e.,
\begin{eqnarray}\nonumber \label{rhoOP-1-3}
\varrho_{\kappa\in[\kappa_0,1]}&=&  2\left(\frac{1}{d}+x\right)\ket{\Psi_1}\bra{\Psi_1}\quad (\textrm{for } \gamma_1\ge 2) \\ \nonumber
&+&\left(\frac{1}{d}-\frac{2x}{d-2}\right)\sum_{i=2}^{d-1}\ket{E_i}\bra{E_i},
\end{eqnarray}
where $x$, given by Eq. \eqref{x-1},  is  the positive solution of the purity relation $\kappa=4(1/d+x)^2+(d-2)(1/d-2x/(d-2))^2$.
Using this solution of $x$ in Eq. \eqref{x}, the Lagrange multiplier $\mu$ can be determined  which satisfies  bounds \eqref{mu-Bounds-4} whenever  $\kappa\in[\kappa_0,1]$.

In this regime of $\gamma_1\ge 2$, the rank of the optimal state remains $d-1$ for all purity $\kappa\in[\kappa_0,1)$, but it collapses suddenly to $1$  at $\kappa=1$ where  the optimal state is described by the pure state $P_{\Psi_1}=\ket{\Psi_{1}}\bra{\Psi_1}$. This completes the proof  for  the optimal state for all range of purities in the regime $\gamma_{1}\ge 2$. When, on the other hand,  $\gamma_{1}<2$ we need to continue our search for the optimal state according to the following steps.

\subsection{Proof of   Theorem \ref{Theorem-Opt-Main-2}}\label{Appendix-Theorem-Opt-Main-2}
\subsection*{Step 3 ($\gamma_1<  2,\quad \kappa\in[\kappa_0,\kappa_1], \quad  \kappa\in[\kappa_1,\kappa_2]$)}
If $\gamma_{1} < 2$,  the solution $\mu=\omega_{2,d-1}^2$ does not violate Eq. \eqref{mu-Bounds-4}, implying  that   $\varrho_{2,d-1}$  can be nonzero subject to the positivity condition $|\varrho_{2,d-1}|^2\le \varrho_{22}\varrho_{d-1,d-1}=(1/d-2x/(d-2))^2$.      With $\mu=\omega^2_{2,d-1}=\omega_{1d}^2/\gamma_{1}$, condition C1 leads  to  $x=x_0$ where $x_0$ is given in Eq. \eqref{x0}
and satisfies   $0\le  x_0 \le (d-2)/(2d)$ as $1 \le \gamma_{1} \le  2$. Putting $x=x_0$  into
\begin{eqnarray}\nonumber
\kappa=4\left(\frac{1}{d}+x\right)^2+(d-2)\left(\frac{1}{d}-\frac{2x}{d-2}\right)^2+ 2|\varrho_{2,d-1}|^2,
\end{eqnarray}
and solving for $|\varrho_{2,d-1}|^2$ we get $|\varrho_{2,d-1}|^2=(\kappa-\kappa_1)/2$ where
\begin{eqnarray}\label{kappa1-2}
\kappa_1=4\left[1/d+x_0\right]^2+(d-2)\left[1/d-2x_0/(d-2)\right]^2.
\end{eqnarray}
In turn,  for  $\kappa \le \kappa_1$ the off-diagonal entry  $|\varrho_{2,d-1}|$ attains  its minimum  value $0$, accordingly, for $\kappa\in [\kappa_0,\kappa_1]$ the optimal  state is still served by $\varrho_{\kappa}$ of Eq. \eqref{rhoOP-1-2}  for which $\varrho_{2,d-1}=0$, so that
\begin{eqnarray}\nonumber
\varrho_{\kappa\in[\kappa_0,\kappa_1]}&=&  2\left(\frac{1}{d}+x\right)\ket{\Psi_1}\bra{\Psi_1}\quad (\textrm{for } \gamma_1 < 2) \\ \label{rhoOP-2-5}
&+&\left(\frac{1}{d}-\frac{2x}{d-2}\right)\sum_{i=2}^{d-1}\ket{E_i}\bra{E_i},
\end{eqnarray}
where $x$ is given by Eq. \eqref{x-1}.

On the other hand, the   condition  $|\varrho_{2,d-1}|^2\le \left[1/d-2x_0/(d-2)\right]^2$  provides us with    $\kappa_1 \le \kappa \le \kappa_2$, where
\begin{eqnarray}\label{kappa2-2}
\kappa_2 &=& \kappa_1+2\left[1/d-2x_0/(d-2)\right]^2.
\end{eqnarray}
The corresponding state  for $\kappa\in[\kappa_1,\kappa_2]$ then  turns to be  Eq. \eqref{rhoOP-2-1}.
It turns out also that  $\textrm{rank}\{\varrho_{\kappa}\}=d-1$ for $\kappa\in[\kappa_1,\kappa_2)$, but it is diminished by 1 for $\kappa=\kappa_2$ such that  $\textrm{rank}\{\varrho_{\kappa_2}\}=d-2$.

Note that, by  using $x_0$ from Eq. \eqref{x0},  Eqs. \eqref{kappa1} and \eqref{kappa2} for $\kappa_1$ and $\kappa_2$ can be obtained from Eqs. \eqref{kappa1-2} and \eqref{kappa2-2}, respectively.
Moreover,  $\kappa_1$ and $\kappa_2$ satisfy $\kappa_0\le \kappa_1 \le 1$ and $\kappa_0+2/d^2\le \kappa_2 \le 1$ (where $\kappa_0$ is given by Eq. \eqref{kappa0}) proportional  to  $0\le  x_0\le (d-2)/(2d)$ (or equivalently $1\le  \gamma_{1} \le  2$).

\subsection*{Step 4 ($\gamma_1< 2,\quad  \kappa\ge \kappa_2$)}
 We should continue this procedure until the purity tends to $\kappa=1$, corresponding to the optimal pure state $P_{\Psi_1}=\ket{\Psi_{1}}\bra{\Psi_1}$.
To obtain the optimal  state for $\kappa\ge \kappa_2$, we start with the optimal state \eqref{rhoOP-2-1} when  $\kappa=\kappa_2$, i.e.,
\begin{eqnarray}\nonumber
\varrho_{\kappa_2} &=& 2\left(\frac{1}{d}+x_0\right)\ket{\Psi_1}\bra{\Psi_1}+2\left(\frac{1}{d}-\frac{2x_0}{d-2}\right)\ket{\Psi_2}\bra{\Psi_2} \\
&+&\left(\frac{1}{d}-\frac{2x_0}{d-2}\right)\sum_{i=3}^{d-2}\ket{E_i}\bra{E_i} \quad (\textrm{for } \gamma_1 < 2).
\end{eqnarray}
Above,  $\ket{\Psi_{1}}=\frac{1}{\sqrt{2}}(\ket{E_1}+\e^{-i\theta_1}\ket{E_d})$ and $\ket{\Psi_{2}}=\frac{1}{\sqrt{2}}(\ket{E_2}+\e^{-i\theta_2}\ket{E_{d-1}})$ are two orthogonal pure states,  having maximum squared speed in the subspaces spanned by $\{\ket{E_1},\ket{E_d}\}$ and $\{\ket{E_2},\ket{E_{d-1}}\}$, respectively.

Starting with $\varrho_{\kappa_2}$, we  follow the method provided in Step 1, i.e.,  (i)  increasing some off-diagonal entries $\varrho_{ij}$ except $\varrho_{1d}$ and $\varrho_{2,d-1}$, but keeping the diagonal entries  fixed. (ii) Changing  the off-diagonal entries  $\varrho_{1d}$ and $\varrho_{2,d-1}$, by changing  the products  $\varrho_{11}\varrho_{dd}$ and $\varrho_{22}\varrho_{d-1,d-1}$, respectively,  in such a way that $|M(1,d)|$ and $|M(2,d-1)|$ remain nonnegative.
It turns out that  $\ket{\Psi_1^{\perp}}$ no longer belongs to $\textrm{Ker}\{\varrho_{\kappa\ge \kappa_2}\}$  unless  $|\varrho_{1d}|^{2}=\varrho_{11} \varrho_{dd}$ with   $\varrho_{11}=\varrho_{dd}$. Similarly, $\ket{\Psi_2^{\perp}}$ no longer belongs to $\textrm{Ker}\{\varrho_{\kappa\ge \kappa_2}\}$ unless  $|\varrho_{2,d-1}|^{2}=\varrho_{22} \varrho_{d-1,d-1}$ with  $\varrho_{22}=\varrho_{d-1,d-1}$.
Accordingly,  two  off-diagonal elements $|\varrho_{1d}|$ and $|\varrho_{2,d-1}|$ touch a new record as $\varrho_{11}=\varrho_{dd}=1/d+x_0 +y_1$  and $\varrho_{22}=\varrho_{d-1,d-1}=1/d-2x_0/(d-2)-y_2$ with $y_1,y_2\ge 0$. The optimal state for $\kappa\ge \kappa_2$    should be  searched  among the following states
\begin{eqnarray}\nonumber
\varrho_{\kappa\ge \kappa_2} &=& 2\left(\frac{1}{d}+x_0+y_1\right)\ket{\Psi_1}\bra{\Psi_1}  \quad (\textrm{for } \gamma_1 < 2)\\ \nonumber
&+&2\left(\frac{1}{d}-\frac{2x_0}{d-2}-y_2\right)\ket{\Psi_2}\bra{\Psi_2} \\ \nonumber
&+&\left(\frac{1}{d}-\frac{2x_0}{d-2}-\frac{2(y_1-y_2)}{d-4}\right)\sum_{i=3}^{d-2}\ket{E_i}\bra{E_i} \\  \label{Rho-Op-k>k2}
&+& \left(\sum_{j=i+1}^{d-2}\sum_{i=3}^{d-3}\varrho_{ij}\ket{E_i}\bra{E_j}+\textrm{C.C.}\right),
\end{eqnarray}
Further calculation can be done by optimizing  the squared speed $v^2(\varrho_{\kappa})$ subject to  the purity condition $\kappa-\Tr\varrho_{\kappa}^2=0$.
The calculations are, however,  more  tedious and long and we do not get involved in for a general $d$.

\begin{remark}
By specializing to the particular case of   $d\le 4$, the steps given above are exhaustive so one can obtain the optimal states \eqref{Optimum-d=3} and \eqref{Optimum-d=4} for $d=3$ and $d=4$, respectively. In particular, for $d=4$, we have $y_1=y_2$ where $y_1=\sqrt{(2\kappa-1)}/4-x_0$ is the positive solution of the purity relation $\kappa=\Tr\varrho_{\kappa}^2$. Equation  \eqref{Rho-Op-k>k2} leads therefore  to
\begin{eqnarray}\nonumber
\varrho_{\kappa\in [\kappa_2,1]} &=&  \frac{1}{2}\left(1+\sqrt{2\kappa-1}\right)\ket{\Psi_1}\bra{\Psi_1} \quad (\textrm{for } \gamma_1 < 2) \\  \label{Rho-Op-k>k2-d=4}
&+&\frac{1}{2}\left(1-\sqrt{2\kappa-1}\right)\ket{\Psi_2}\bra{\Psi_2},
\end{eqnarray}
which completes the optimal state for entire  range of purities $\kappa\in[1/4,1]$ when  $d=4$.
\end{remark}


\end{document}